\let\emptyset\varnothing
\newcommand{\revised}[1]{{#1}} 
\def\BibTeX{{\rm B\kern-.05em{\sc i\kern-.025em b}\kern-.08em
    T\kern-.1667em\lower.7ex\hbox{E}\kern-.125emX}}
\begin{document}

\title{Interactive Set Discovery}

\author{Arif Hasnat}
\affiliation{%
  \institution{University of Alberta}
  \city{Edmonton}
  \country{Canada}}
\email{hasnat@ualberta.ca}

\author{Davood Rafiei}
\affiliation{%
  \institution{University of Alberta}
  \city{Edmonton}
  \country{Canada}}
\email{drafiei@ualberta.ca}


\begin{abstract}
We study the problem of set discovery where given a few example tuples of a desired set, we want to find the set 
in a closed collection of sets.
A challenge is that the example tuples may not uniquely identify a set, and a large number of candidate sets may be returned.
Our focus is on interactive exploration to set discovery where additional example tuples from the candidate sets are shown and the user either accepts or rejects them as members of the target set.
The goal is to find the target set with the least number of user interactions. 
The problem is cast as an optimization problem where we want to find a decision tree that can guide the search to the target set with the least number of questions to be answered by the user. We propose a general algorithm, capable of reaching an optimal solution, and two variations of it that strike a balance between the quality of a solution and the running time. We also propose a novel pruning strategy that safely reduces the search space without introducing false negatives.
\revised{
Our extensive evaluation on both real and synthetic data show that our approach is effective, comparable to or improving upon SOTA, while our pruning strategy reduces the running time of the search algorithms by 2-5 orders of magnitude.}
\end{abstract}

\maketitle

\section{Introduction}
Consider a patient walking to a clinic and being greeted by a machine who does the triage. The patient types headache, nausea and fatigue as symptoms, and the machine checks its database of disease cases and finds over thousands matching each symptom and over hundreds matching all three.
What are the best ways of narrowing down the cases? What are the next few questions the machine asks?

\revised{
Many database interfaces behave in a similar fashion in that there is a large collection of sets and the user is searching for a particular set in the collection. In SQL interfaces, in particular, queries express propositions over \textit{sets of tuples} where the grouping of tuples into sets (e.g. customers who live in Toronto vs those who do not) is \textit{inherent} in queries, and the user is forced to precisely express all propositions in a query to describe the target set.
However, writing SQL queries is a challenging task for many do-it-yourself scientists and professionals. For example, both the Sloan Digital Sky Survey Project~\cite{SDSS} and the SQL-Share Project~\cite{SQLShare,howe2011database,jain2016sqlshare} allow scientists to query their data using SQL, but not many scientists using these projects are expected to know SQL.

In example-based query interfaces~\cite{Weiss-reverse-engineering,Li2015,mottin2017new}, the user provides a set of tuples that are expected to be in the target set (as positive examples) and a set of tuples that are not (as negative examples), and a query that satisfies the given constraints is suggested. The user may adjust the query or provide additional tuples, as positive or negative examples, in an interactive fashion until a precise query expression describing the target set is found.

In browsing-based interfaces, a repository of queries are maintained and the users can browse the repository for similar queries that can be reused with small or no changes~\cite{khoussainova2011session,zhang2019mining}. For example, both the Sloan Digital Sky Survey and the SQL-Share projects keep popular user queries and support searches over those queries. Similarly, in query recommendation, past queries that are recorded in a query log may be recommended as a whole or in part, based on the query fragments that are typed~\cite{chatzopoulou2009query,milo2018next}. 
If each query in the repository is treated as a set of tuples (e.g. the result of the query applied to a database instance), then the recommendation engine is searching for a set in the collection.

In all aforementioned cases, there is a collection of sets or queries and the user is searching for a particular set in the collection. 
We study an interactive exploration approach to set discovery where example tuples from the candidate sets are shown, and the user either accepts or rejects those tuples as members of the target set. 

The number of interactions does not depend on the number of tuples in the database and is only a function of the number of sets (e.g. see Figures~\ref{fig:effects-set-size} and \ref{fig:effects-collection-size}). In many cases, the number of sets that are similar to a target set is expected to be small. For example, Zhong et al.~\cite{zhong-testSuite} show that generating 94 neighbour queries on average for a target SQL query is sufficient to distinguish queries that are different from the target query. If $k$ denotes the number of sets that are similar to a target set, the number of interactions is $k-1$ in the worst cases and closer to $log k$ in most cases.
As each interaction with the user has a cost, we want to retrieve the target set with the least number of interactions.
Relevant research questions are:
(1) what exploration strategies may be used, and how efficient are those strategies?
(2) how long does an exploration take and what factors (e.g., set sizes, overlaps, etc.) do affect the exploration time?
(3) how may the sets be organized to support an efficient exploration?}

\noindent{\bf Problem statement}
\revised{
Informally, set discovery is an interactive process that starts with an initial membership question posed to the user and continues with follow-up questions based on the user's answers. The problem is how to select the next question such that the number of interactions is minimized.
More formally,}
given a collection $C$ of unique sets and an initial set $I$, which includes a subset of the user's desired set, the goal is to find a target set $G$ in $C$ such that $I \subseteq G$. We want to narrow down the search through interactions, i.e. asking the user membership questions about example tuples from $C$, \revised{and we want to find the target set with the minimal number of interactions.} With no user interaction, the problem is under-specified and more than one such set $G$ can contain the elements of $I$ unless $G=I$, in which case a search is meaningless since the user has listed the full target set. 
Also, when $I$ is an empty set, then $G$ is fully identified through interactions with the user.

\noindent{\bf Our approach}
We cast the problem as an optimization with the aim of minimizing the number of questions that the user needs to answer. \revised{The search for a target set is modeled as a tree traversal from the root to a leaf, with each node representing a step of the exploration where the user is given a question that can reduce the number of candidate sets. 
The problem of optimal decision tree construction is NP-hard, and there are strong results on the non-approximability of the problem (see Sec.~\ref{sec:relatedwork}). Our work improves upon a SOTA approximate algorithm, in terms of the quality of the tree, while significantly reducing the size of the search space using strongly effective pruning strategies. 
The tree construction is done \textit{online} in an incremental fashion with questions answered; we also discuss a strategy which aims to reduce the online search cost with an \textit{offline} tree construction.} Assuming all candidate sets in $C$ being equally likely to be the target set $G$, we consider two exploration scenarios: (1) \textit{average-case} where the average number of questions over all possible target sets is minimized, and (2) \textit{worst-case} where the maximum number of questions over all possible target sets is minimized. Our algorithms are general and work under both exploration scenarios.

\noindent\textbf{Contributions}
Our contributions can be summarized as follows:
\begin{itemize}
\item We formalize interactive set discovery as an optimization problem, minimizing the number of questions posed to users.
\item We propose cost functions to characterize the quality of a decision tree for interactive set discovery, in terms of its worst-case and average-case performance, and some lower bounds that are easy to compute but effective in pruning the search space.
\item We propose a pruning strategy, based on our lower bounds, that allows certain choices of entities for decision tree nodes to be safely rejected if there is evidence that it cannot lead to a better tree than one already found.
\item Based on our pruning strategy, we develop an efficient lookahead algorithm that can find near-optimal trees in many cases. We also develop two variations of our lookahead algorithm to further speed up the search process by bounding the number of entities in each step of the search.
\item Through an extensive experimental evaluation, we show that our pruning strategy is effective, reducing the running time by a few orders of magnitude, and that our algorithms outperform competitive approaches from the literature.
\end{itemize}

The rest of the paper is organized as follows. We review the related work in Section~\ref{sec:relatedwork} and present the problem and our formulation in Section~\ref{sec:problem-formulation}. Our algorithms and strategies are discussed in Section~\ref{sec:methodology}, and in Section~\ref{sec:experiments}, we experimentally evaluate their performance. Finally, we conclude with a discussion in Section~\ref{sec:discussion} and provide remarks for future work in Section~\ref{sec:conclusions}.

\section{Related Work}
\label{sec:relatedwork}
Our work is related to the lines of work on (a) example-based query discovery, (b) active learning and interactive query discovery, and (c) cost-efficient decision tree construction.

\subsection{Example-based query discovery}
Our work is related to this line of work in that it can be applied to discover target queries based on example tuples if the candidate queries are known or can be enumerated.
The problem of discovering queries based on examples has its root in QBE~\cite{zloof1975query} and has been lately studied for reverse engineering queries in various domains (e.g., relational~\cite{Tran2009QBO} and
graph data~\cite{mottin2014exemplar,arenas2016reverse}).
On discovering SQL queries, in particular, Tran et al.~\cite{tran2014query} study the problem for select-project queries, and others study project-join queries~\cite{Zhang2013,kalashnikov2018fastqre}.
Weiss et al. \cite{Weiss-reverse-engineering} show that the problem of discovering SPJ from examples is NP-hard when there is a bound on the size of queries. 
An underlying assumption in many of these works is that the queries can be discovered on small instances (where the answer tuples can be easily listed) before being applied to larger instances.
\revised{
Unlike the aforementioned works that focus on a specific query type to keep the complexity of query generation under control, there is no such restriction in our work. Our query discovery is done based on the query output on a sample database, hence any pair of different queries must return different results on the sample instance to be distinguishable. Also we assume the set of queries are given or can be enumerated. There is no other constraint on queries and their complexity.
The approaches on reverse engineering queries are limited in terms of the type of queries they can produce, and these approaches are not applicable when the target is a set and not a query.}

\subsection{Active learning and interactive query discovery}
Related work includes active learning~\cite{settles2009active}, where a model is learned by interacting with a user, and interactive exploration to learn a desired query.
Angluin~\cite{angluin1988queries} shows the polynomial learnability of conjunctions of horn clauses and Abouzied et al. \cite{Abouzied-interactive} show that efficient solutions for a subset of quantified Boolean queries, referred to as role-preserving qhorn queries, are reachable. In both cases, users specify propositions that hold by answering membership questions (e.g. a row is or is not in the answer) and this helps to narrow down the search for candidate queries.
Bonifati et al. \cite{Bonifati14interactiveinference, Bonifati2016} infer join queries and Dimitriadou et al. \cite{Dimitriadou2014} predict conjunctive queries, both
based on interactions in the form of simple yes/no answers about the presence of tuples in the final output.
Li et al. \cite{Li2015} take a sample database and a desired result and generate candidate SPJ queries that produce the result on the sample, using the approach of Tran et al. \cite{tran2014query}. In each  follow-up interaction, the user is provided with a modified database and a collection of query results to choose from, based on which candidate queries are removed until a query emerges.

In all aforementioned works, the search takes place over the space of possible queries that can be generated, and the size of this space is bounded by placing constraints on the the shape of queries. For example, this space in Abouzied et al. \cite{Abouzied-interactive} is the cartesian product of all domains, which means with $m$ Boolean variables, there are $3^m$ possible assignments of constants and don't-care values to those variables and that many propositional logic queries. This space in Bonifati et al. \cite{Bonifati14interactiveinference, Bonifati2016} is all subsets of the Cartesian product of the two tables being joined while limiting the predicates to equijoin, in
Dimitriadou et al. \cite{Dimitriadou2014} is the set of rectangular regions defined by the conjunction of range predicates on numerical data and in Li et al. \cite{Li2015} is a set of conjunctive queries.
In our case, the search takes place over a closed collection of sets, and there is no constraint on the shape of queries that may generate those sets. For example, the sets in one of our datasets are generated using SQL queries with CNF formulas in the where clause and in another dataset by performing union over arbitrary sets.
\revised{
Also unlike active learning where the classes are not explicitly known or not enumerated and achieving 100\% accuracy is out of reach, in our case the sets to be discovered and their boundaries and relationships in terms of overlaps are fully known.}

\subsection{Cost-efficient decision tree construction}
There are some strong results on the nonapproximability of the problem. Sieling~\cite{sieling2008minimization} shows that the problem cannot be approximated up to any constant factor, based on the nonapproximability of Vertex Cover for Cubic graphs and that the problem can be mapped to an optimal decision tree construction. In a much stronger result, Dinue and Steurer~\cite{dinur2014analytical} show that optimal set cover cannot be approximated to $(1-o(1))ln n$ unless $P=NP$, and the same result holds for optimal decision tree construction based on a reduction from set cover~\cite{laurent1976constructing}.
Adler et al.~\cite{Adler2008} propose a greedy algorithm which achieves $(\ln n + 1)$-approximation, by simply choosing an entity at each decision node that most evenly partitions the collection of items. This greedy algorithm sets a strong baseline in terms of the approximability of the problem, and many commonly-used approaches (e.g. Information Gain~\cite{Quinlan1993}, ID3~\cite{Quinlan1986} and C4.5~\cite{Quinlan1993}) are all variations of this 1-step lookahead greedy algorithm (see Sec.~\ref{sec:entity-selection}). 
Esmeir et al.~\cite{Esmeir2004} propose lookahead based algorithms for anytime induction of decision trees by developing k-steps entropy and information gain.
Our proposed algorithm for set discovery improves upon the 1-step lookahead approaches, which are pretty strong baselines, but is 2 to 5 orders of magnitude faster than the k-steps of Esmeir et al., thanks to our powerful pruning strategies. 

\section{Problem Formulation} \label{sec:problem-formulation}
Consider a collection of $n$ candidate sets and a target set in the collection that needs to be identified. Without loss of generality, we assume the sets are all unique; if not, duplicates can be removed without affecting the search task. We want to find the target set through a set of membership questions that the user answers (e.g., Is $A$ in the target set?). At a high level, we want to minimize the number of interactions.

A general approach to the search problem is to construct a decision tree with the candidate sets placed at the leaves and each internal node representing a question. With interactions limited to yes/no membership questions, the decision tree will be a full binary tree with $n$ leaves and $n-1$ internal nodes. The number of such decision trees that can be constructed is huge\footnote{The actual number is the $(n-1)$th Catalan number, i.e.,  $\frac{1}{n}\binom{2(n-1)}{n-1} = \frac{(2(n-1))!}{n!(n-1)!}$.},
and some of those trees are more efficient for finding the target set than others.

Let $m$ denote the size of the universe from which the sets are drawn. For a collection $C$ of finite sets, $m=|\bigcup_{s \in C} s|$. In our presentation, we may refer to the members of the universe as entities, though our approach is applicable to any sets of tuples (e.g., sets of relationships). For a fixed tree shape with $n-1$ internal nodes,  the number of possible placements of $m$ entities or tuples on internal nodes will be $m(m-1)\ldots (m-n+2) = \frac{m!}{(m-n+1)!}$, assuming that each entity appears at most once in the tree. Otherwise, this number is even larger. Searching for an efficient decision tree among all these tree shapes and possible placements of entities on internal nodes is a major computational challenge, and that is the problem studied in this paper.

To alleviate the problem, one may group entities in $C$ into \textit{informative} and \textit{uninformative}. An entity that is either present in all sets in $C$  or none is not informative, since a membership question about that entity does not reduce the search space. The rest of the entities can be considered as informative. Clearly we want to limit our questions to informative entities, and only place those entities on the internal nodes.

\begin{example}
Consider the collection of seven sets, as shown in Fig.~\ref{fig:example-sets}. Entity $a$ is uninformative since it is present in all sets. All the other entities $b, c, ..., k$ are informative. Fig.~\ref{fig:example-trees} shows three possible decision trees that represent the sets in the collection. All the trees are full binary decision trees with 6 internal nodes and 7 leaves. The root node corresponds to all sets of the collection. In Fig.~\ref{fig:example-tree1}, the left branch corresponds to the sub-collection $\{S1, S2, S3\}$ where entity $d$ is present and the right branch corresponds to the sub-collection $\{S4, S5, S6, S7\}$ where $d$ is not present.
Each branch is further broken down based on the presence or absence of entities.
\end{example}

\begin{figure}[!htb]
    \centering
    \begin{tabular}{lll}
        $S1 = \{a, b, c, d\}$ & $S2 = \{a, d, e\}$ & $S3 = \{a, b, c, d, f\}$ \\
        $S4 = \{a, b, c, g, h\}$ & $S5 = \{a, b, h, i\}$ & $S6 = \{a, b, j, k\}$ \\
        $S7 = \{a, b, g\}$ & &
    \end{tabular}
  \caption{A collection of example sets}
  \label{fig:example-sets}
\end{figure}

\begin{figure*}[!htb]
  \centering
    \begin{subfigure}[b]{0.33\linewidth}
        \centering
        \includegraphics[width=\linewidth]{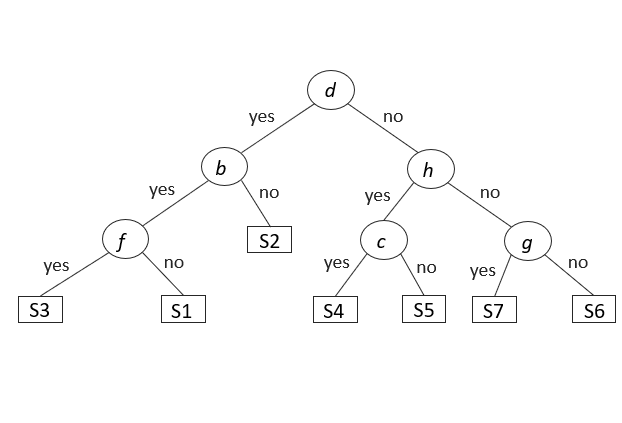}
        \caption{}
        \label{fig:example-tree1}
    \end{subfigure}
    \hfill
    \begin{subfigure}[b]{0.33\linewidth}
        \centering
        \includegraphics[width=\linewidth]{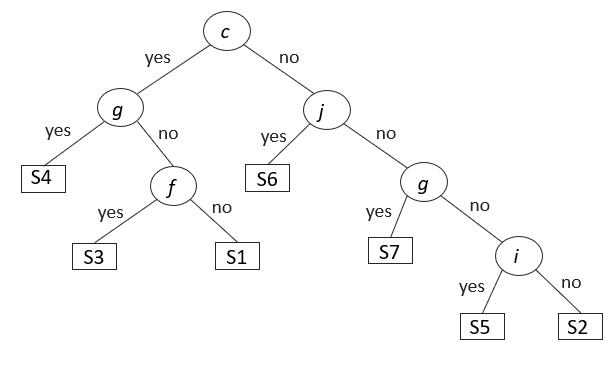}
        \caption{}
         \label{fig:example-tree2}
    \end{subfigure}
    \begin{subfigure}[b]{0.33\linewidth}
        \centering
        \includegraphics[width=\linewidth]{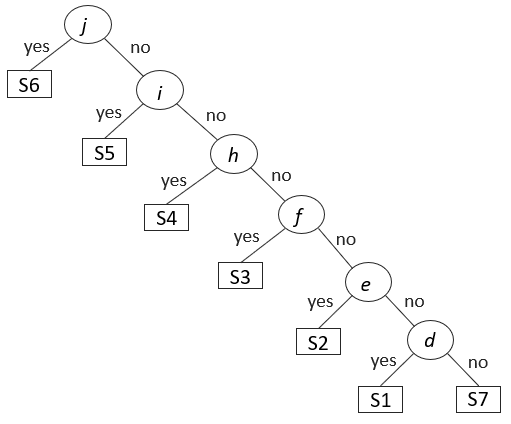}
        \caption{}
         \label{fig:example-tree3}
    \end{subfigure}
  \caption{Example of decision tree representations of the sets in Figure~\ref{fig:example-sets}}
  \label{fig:example-trees}
\end{figure*}

Given a decision tree, the number of questions that are required to find a set is determined by the depth at which the set is placed. For example, in Fig.~\ref{fig:example-tree1}, $S2$ can be detected using two questions whereas one will need three questions to find any other set. Since we do not know the target set in advance, and assuming that all sets are equally likely, the cost of a tree can be defined as the average depth of the leaves which equivalently represents the expected number of questions required to find the target set.

\begin{definition}
\label{def:ad}
Let $T$ be a full binary decision tree over a collection $C$ of unique sets, i.e., $T$ has exactly $|C|$ leaves and each leaf is labelled with a set in $C$. If \textit{depth(s,T)} denote the depth of a set $s$ in $T$, then the cost of $T$ is defined as
\[cost(T) = \frac{\sum_{s \in C}depth(s,T)}{|C|}.\]
\end{definition}

Alternatively, one can also define the cost of a tree as the height\footnote{Here height refers to the depth of the leaf with the longest distance from the root, i.e., the number of questions to be answered to reach the deepest leaf.} of the tree.
For a collection with $n$ unique sets, the height of a full binary decision tree cannot be less than $\lceil\log_2n\rceil$ for $n>0$. This sets a lower bound on the height (H) of an optimal tree, which we refer to as $LB\_H(n)$. The next lemma gives the lower bound on the average depth of the leaves.

\begin{lemma}
Given a collection of $n$ unique sets such that $n > 0$, a lower bound on the average depth of the leaves (AD) of a full binary decision tree representing the collection, denoted as $LB\_AD(n)$, is $\lceil n\log_2n\rceil/n$.
\label{lem:least-ad}
\end{lemma}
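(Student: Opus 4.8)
The plan is to bound the external path length $\sum_{s\in C}\mathrm{depth}(s,T)$ from below and then divide by $n=|C|$. The main tool is Kraft's equality: for any full binary decision tree $T$ over $C$, in which every internal node has exactly two children, $\sum_{s\in C} 2^{-\mathrm{depth}(s,T)} = 1$. I would prove this by induction on the number of leaves. It is immediate when $T$ is a single leaf (depth $0$, contribution $2^0=1$), and for $|C|\ge 2$ the root splits $C$ into the leaf sets $C_L$ and $C_R$ of its two subtrees $T_L$ and $T_R$, where every leaf's depth increases by exactly one when the subtree is attached under the root; hence $\sum_{s\in C}2^{-\mathrm{depth}(s,T)} = \tfrac12\sum_{s\in C_L}2^{-\mathrm{depth}(s,T_L)} + \tfrac12\sum_{s\in C_R}2^{-\mathrm{depth}(s,T_R)} = \tfrac12+\tfrac12=1$ by the induction hypothesis.

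Next I would apply Jensen's inequality to the convex function $f(x)=2^{-x}$ with the uniform weights $1/n$ over the $n$ leaves. Writing $\bar d = cost(T) = \tfrac1n\sum_{s\in C}\mathrm{depth}(s,T)$ for the average depth, convexity of $f$ gives $f(\bar d) \le \tfrac1n\sum_{s\in C} f(\mathrm{depth}(s,T)) = \tfrac1n\sum_{s\in C} 2^{-\mathrm{depth}(s,T)} = \tfrac1n$ by Kraft's equality. Hence $2^{-\bar d}\le 1/n$, and taking logarithms yields $\bar d \ge \log_2 n$, i.e. $\sum_{s\in C}\mathrm{depth}(s,T) \ge n\log_2 n$. (Equivalently, this is the information-theoretic statement that a decision tree resolving $n$ equally likely outcomes has expected depth at least the entropy $\log_2 n$.)

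Finally I would invoke integrality: $\sum_{s\in C}\mathrm{depth}(s,T)$ is a non-negative integer that is at least $n\log_2 n$, so it is at least $\lceil n\log_2 n\rceil$. Dividing by $n$ gives $cost(T) \ge \lceil n\log_2 n\rceil / n = LB\_AD(n)$, and since $T$ was an arbitrary full binary decision tree over $C$ this is a valid lower bound on the average depth. There is essentially no hard step here; the only points needing care are getting the direction of Jensen's inequality right (it is the convexity of $2^{-x}$, not the concavity of $\log$, that drives the bound) and applying the ceiling only after the real-valued bound $n\log_2 n$ on the integer path length has been established. An alternative route avoiding Jensen is a direct induction on $n$ using $E(n)=n+\min_{a+b=n,\ a,b\ge 1}(E(a)+E(b))$ for the minimum external path length, proving $E(n)\ge n\log_2 n$; I would keep the Kraft/Jensen argument as the primary proof since it is shorter and also explains the bound conceptually.
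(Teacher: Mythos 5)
Your proof is correct, and its overall skeleton matches the paper's: first establish that the average leaf depth of a full binary tree with $n$ leaves is at least $\log_2 n$, then use the fact that the external path length $\sum_{s\in C}\mathrm{depth}(s,T)$ is an integer to round up to $\lceil n\log_2 n\rceil$ before dividing by $n$. The difference is in how the first step is handled: the paper simply \emph{asserts} that the average depth cannot be less than $\log_2 n$, treating it as a known fact, whereas you actually prove it via Kraft's equality $\sum_{s\in C}2^{-\mathrm{depth}(s,T)}=1$ (established by induction on the leaves) combined with Jensen's inequality for the convex function $2^{-x}$. Your version is therefore more self-contained and rigorous; it also makes explicit why the bound is information-theoretic in nature, and you correctly apply the ceiling only to the integer-valued path length rather than to the average itself. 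Nothing in your argument is wrong or redundant --- it supplies exactly the justification the paper leaves implicit.
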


\begin{proof}
The average depth of the leaf nodes of a full binary decision tree representing $n$ sets cannot be less than $\log_2n$. Hence, the sum of depth of the $n$ leaf nodes cannot be less than $\lceil n\log_2n\rceil$ since it must be an integer number. Therefore, a lower bound on AD for $n$ unique sets is $\lceil n\log_2n\rceil/n$.
\end{proof}

Now let's examine the trees in Fig.~\ref{fig:example-trees} again. A lower bound on AD of any full binary decision tree representing a collection of 7 sets, according to Lemma~\ref{lem:least-ad}, is $2.857$. The AD of the tree in Fig.~\ref{fig:example-tree1} is $2.857$, Fig.~\ref{fig:example-tree2} is $3.0$ and that of the tree in Fig.~\ref{fig:example-tree3} is $3.857$, hence the first tree is optimal but the others are not.

Given a collection of $n$ unique sets, our goal can be stated as finding a full binary decision tree representation of the collection with the least cost where the cost metric is either AD or H.

\section{Methodology}
\label{sec:methodology}
Given a collection $C$ of unique sets, our set discovery process constructs a decision tree with the sets in $C$ placed at the leaves and the membership questions about entities placed at the internal nodes.
Since there are many possible trees that can be constructed, and some are more efficient than others, our goal is to find a tree that leads to the least number of interactions with the user. Before presenting our algorithms, we develop a few lower bounds on cost, which will be used in pruning the search space of our algorithms.

\subsection{Cost Lower Bounds}
\label{sec:lower-bounds}
Given a collection $C$ of unique sets, two lower bounds on cost (as discussed in Section~\ref{sec:problem-formulation}) are
\begin{equation}\label{eqn:lb-0-ad}
    LB\_{AD}_{0}(C) = \frac{\lceil |C| * \log_2|C|\rceil}{|C|}, and
\end{equation}
\begin{equation}\label{eqn:lb-0-h}
    LB\_H_{0}(C) = \lceil\log_2|C|\rceil
\end{equation}
for cost metrics AD and H respectively. Now consider an 
entity $e$ that partitions $C$ into two sub-collections $C1$ and $C2$. 
Our cost lower bounds, after placing $e$ at the current node of the decision tree, can be written as

\begin{equation}\label{eqn:lb-1-e-ad}
    LB\_AD_{1}(C, e) = \frac{|C1| * LB\_AD_{0}(C1) + |C2| * LB\_AD_{0}(C2)}{|C|} + 1,
\end{equation}	
and
\begin{equation}\label{eqn:lb-1-e-h}
  LB\_H_{1}(C, e) = max(LB\_H_{0}(C1),  LB\_H_{0}(C2)) + 1
\end{equation}
for cost metrics AD and H respectively, where the index `1' in the bounds indicates that the cost is calculated using information available after looking one step ahead, i.e. one level below the current node.
We use the general term $LB$ to refer to any lower bound (including $LB\_AD$ and $LB\_H$) when a distinction in the cost metric is not important.
	
Let $E$ denotes the set of entities in collection $C$. A lower bound on cost over all entities with 1-step look ahead is
\begin{equation}\label{eqn:lb-1}
   LB_{1}(C) = min_{e \epsilon E}LB_{1}(C, e). 
\end{equation}

These definitions can be extended for k-steps lookahead  as
\begin{equation}\label{eqn:lb-k-e-ad}
\begin{split}
    LB\_AD_{k}(C, e) &= \frac{|C1| * LB\_AD_{k-1}(C1) + |C2| * LB\_AD_{k-1}(C2)}{|C|} \\
    &+ 1,\ and
\end{split}
\end{equation}
\begin{equation}\label{eqn:lb-k-e-h}
  LB\_H_{k}(C, e) = max(LB\_H_{k-1}(C1),  LB\_H_{k-1}(C2)) + 1
\end{equation}
for cost metrics AD and H respectively. A lower bound over all entities is
   
\begin{equation}\label{eqn:lb-k}
    LB_{k}(C) = min_{e \epsilon E}LB_{k}(C, e). 
\end{equation} 
   
A desirable property of these lower bounds is their monotonicity and that the cost lower bounds never decrease. This means the lower bounds can get tighter but not looser, as we look more and more steps ahead. The next two lemmas state this more formally.

\begin{lemma}\label{lem:least-cost-comp}
For any collection $C$, $LB_k(C)$ is a monotone non-decreasing function of $k$,
i.e., for non-negative integers  $k1$ and $k2$, if $k2 > k1$, then $LB_{k2}(C) \geq LB_{k1}(C)$. 
\end{lemma}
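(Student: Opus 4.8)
The plan is to prove the claim by showing the one‑step version $LB_{k+1}(C)\geq LB_k(C)$ for every collection $C$ and every integer $k\geq 0$; the general statement for arbitrary $k_2>k_1$ then follows by chaining these inequalities. I would handle the cost metrics AD and H separately, since the aggregation rule differs (a size‑weighted average in \eqref{eqn:lb-1-e-ad}/\eqref{eqn:lb-k-e-ad} versus a maximum in \eqref{eqn:lb-1-e-h}/\eqref{eqn:lb-k-e-h}), but in both cases the argument splits into a base case $k=0$ and an induction step, with the induction being on $k$ and quantified over \emph{all} collections simultaneously so the hypothesis is available on the sub‑collections $C1,C2$.

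For the base case $LB_1(C)\geq LB_0(C)$ with the AD metric, fix any informative entity $e$ splitting $C$ into $C1,C2$ with $|C1|+|C2|=|C|$. I would first note $|C_i|\cdot LB\_AD_0(C_i)=\lceil |C_i|\log_2|C_i|\rceil\geq |C_i|\log_2|C_i|$, so the numerator of \eqref{eqn:lb-1-e-ad} before adding $1$ is an \emph{integer} that is at least $|C1|\log_2|C1|+|C2|\log_2|C2|$. The key inequality is then the entropy bound $|C|\log_2|C|-|C1|\log_2|C1|-|C2|\log_2|C2| = |C|\,H(|C1|/|C|)\leq |C|$, where $H$ is the binary entropy function (at most $1$); equivalently $|C1|\log_2|C1|+|C2|\log_2|C2|+|C|\geq |C|\log_2|C|$. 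Since the left side of this last inequality, which equals $|C|\cdot LB\_AD_1(C,e)$, is an integer that dominates $|C|\log_2|C|$, it also dominates $\lceil |C|\log_2|C|\rceil$, giving $LB\_AD_1(C,e)\geq LB\_AD_0(C)$; taking the minimum over $e$ in \eqref{eqn:lb-1} yields $LB\_AD_1(C)\geq LB\_AD_0(C)$. For the H metric the base case is easier: $\max(|C1|,|C2|)\geq\lceil|C|/2\rceil$, so $\max(LB\_H_0(C1),LB\_H_0(C2))+1\geq\lceil\log_2\lceil|C|/2\rceil\rceil+1\geq\lceil\log_2|C|\rceil=LB\_H_0(C)$. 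Singleton sub‑collections and $|C|=1$ are trivial since then the relevant bounds are $0$, and informativeness/uniqueness guarantees the minimized set of entities is nonempty whenever $|C|\geq 2$.

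For the induction step, assume $LB_k(C')\geq LB_{k-1}(C')$ for every collection $C'$ and fix $e$ splitting $C$ into $C1,C2$. Comparing \eqref{eqn:lb-k-e-ad} at level $k+1$ with level $k$, the induction hypothesis applied to $C1$ and $C2$ gives $|C_i|\cdot LB\_AD_k(C_i)\geq |C_i|\cdot LB\_AD_{k-1}(C_i)$ term by term, hence $LB\_AD_{k+1}(C,e)\geq LB\_AD_k(C,e)$; minimizing over $e$ via \eqref{eqn:lb-k} gives $LB\_AD_{k+1}(C)\geq LB\_AD_k(C)$. The H case is identical, using that $x\mapsto\max(\cdot,\cdot)$ is monotone in each argument so \eqref{eqn:lb-k-e-h} at level $k+1$ dominates level $k$ once the hypothesis is applied to $C1,C2$. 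I expect the main obstacle to be the base case for AD — specifically, the interplay between the ceiling functions and the entropy inequality; the clean way through is the observation that the numerator in \eqref{eqn:lb-1-e-ad} is integer‑valued, which lets a lower bound by the real quantity $|C|\log_2|C|$ be upgraded to a lower bound by $\lceil|C|\log_2|C|\rceil$. The induction step, by contrast, is a routine monotonicity propagation through the recursive definitions.
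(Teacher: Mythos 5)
Your proof is correct, and it shares the paper's high-level skeleton (induction on $k$, reducing to the one-step inequality $LB_{k+1}\geq LB_k$), but the substance is genuinely different and considerably more rigorous. The paper's own proof argues informally: $LB_0$ corresponds to assuming every node partitions as evenly as possible, $LB_1$ fixes an actual entity at the root, and any deviation from even partitioning ``can only increase'' the bound; the induction step repeats this intuition one level deeper. It never confronts the two points where the claim could actually fail: the ceiling operators in $LB\_AD_0$ and the concavity fact that makes the even-split configuration extremal. Your base case supplies exactly that missing content --- the entropy inequality $|C_1|\log_2|C_1|+|C_2|\log_2|C_2|+|C|\geq|C|\log_2|C|$ combined with the observation that $|C|\cdot LB\_AD_1(C,e)$ is integer-valued, which upgrades the real-valued bound to one dominating $\lceil|C|\log_2|C|\rceil$; and the separate, elementary treatment of H via $\max(|C_1|,|C_2|)\geq\lceil|C|/2\rceil$. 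Your induction step (term-by-term propagation of the hypothesis, quantified over all collections, through the recursive definitions, then minimizing over $e$) is a cleaner formalization of what the paper gestures at. What the paper's version buys is brevity and intuition; what yours buys is an actual verification that the formulas as written, ceilings included, satisfy the claimed monotonicity --- which is the part a careful reader would otherwise have to reconstruct.
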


\begin{proof}
The proof is by induction on $k$. For the basis, 
the lowest possible cost of a binary decision tree on $C$, defined as $LB_{0}(C)$, is calculated assuming that the entity in each node of the tree partitions the sub-collection as evenly as possible. But, $LB_{1}(C)$ is calculated after an actual entity from the collection is assigned to the root node and assuming that all other nodes partitions the sub-collections as evenly as possible. If the entity at the root partitions the collection as evenly as possible then $LB_{1}(C)$ is equal to $LB_{0}(C)$. Otherwise, $LB_{1}(C)$ is greater than $LB_{0}(C)$. 
For the induction step, suppose the claim holds at step $k1$. In each additional step $k2 = k1 + 1$ of the lower bound calculation, an additional level of nodes are assigned with the best entities recursively. If any of those entities does not partition the corresponding sub-collections as evenly as possible then $LB_{k2}(C) > LB_{k1}(C)$, otherwise, $LB_{k2}(C) = LB_{k1}(C)$. Therefore, the statement holds.
\end{proof}

\begin{lemma}\label{lem:least-cost-e-comp}
For any collection $C$ and entity $e$ in the collection, $LB_k(C,e)$ is a monotone non-decreasing function of $k$,
i.e., for positive integers  $k1$ and $k2$, if $k2 > k1$, then $LB_{k2}(C,e) \geq LB_{k1}(C,e)$. 
\end{lemma}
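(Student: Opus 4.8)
The plan is to obtain this lemma as an almost immediate consequence of Lemma~\ref{lem:least-cost-comp}, which already establishes monotonicity of $LB_k$ (the minimum over all entities) in $k$. The key observation is that $LB_k(C,e)$ is assembled from the quantities $LB_{k-1}(C_1)$ and $LB_{k-1}(C_2)$, where $C_1,C_2$ is the \emph{fixed} partition of $C$ induced by $e$, combined through an operator that is monotone non-decreasing in each argument: a weighted mean with nonnegative weights $|C_1|/|C|$ and $|C_2|/|C|$, followed by a $+1$, in the AD case (Eq.~\eqref{eqn:lb-k-e-ad}); or a maximum followed by a $+1$ in the H case (Eq.~\eqref{eqn:lb-k-e-h}).

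First I would fix positive integers $k1 < k2$ and write out the defining equations at levels $k1$ and $k2$. Since $k2 > k1 \ge 1$, we have $k2-1 > k1-1 \ge 0$, so Lemma~\ref{lem:least-cost-comp} applies to each sub-collection and yields $LB_{k2-1}(C_i) \ge LB_{k1-1}(C_i)$ for $i=1,2$, where $LB$ stands for either $LB\_AD$ or $LB\_H$ exactly as in Lemma~\ref{lem:least-cost-comp}. Next I would push these two inequalities through the combining operator: in the AD case, replacing each $LB\_AD_{k2-1}(C_i)$ by the smaller $LB\_AD_{k1-1}(C_i)$ in Eq.~\eqref{eqn:lb-k-e-ad} can only decrease the right-hand side (the weights are nonnegative and the trailing $+1$ is unchanged), giving $LB\_AD_{k2}(C,e) \ge LB\_AD_{k1}(C,e)$; in the H case the same substitution in Eq.~\eqref{eqn:lb-k-e-h} can only decrease the maximum, again with the $+1$ unchanged, giving $LB\_H_{k2}(C,e) \ge LB\_H_{k1}(C,e)$. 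In either cost metric this is the claimed inequality $LB_{k2}(C,e) \ge LB_{k1}(C,e)$.

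I do not expect a real obstacle here; the only points needing a little care are the index bookkeeping — checking that $k1-1 \ge 0$, which is exactly why the statement restricts to \emph{positive} $k1,k2$, so that Lemma~\ref{lem:least-cost-comp} is legitimately invoked on the sub-collections, including the degenerate case of singleton $C_i$ — and the observation that the argument is uniform across the two cost metrics because both combining operators, the weighted mean and the maximum, are coordinatewise monotone. One could instead present this as a one-step induction $k \mapsto k+1$ applying Lemma~\ref{lem:least-cost-comp} at each stage, but the direct argument above already suffices.
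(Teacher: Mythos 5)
Your proof is correct, and it is in fact more explicit than what the paper provides: the paper gives no standalone argument for this lemma, saying only that the proof ``follows the line of reasoning in Lemma~\ref{lem:least-cost-comp}'', i.e.\ it implicitly asks the reader to re-run the level-by-level induction used there. Your route is genuinely different and cleaner: you fix the partition $C_1,C_2$ induced by $e$, invoke the already-established Lemma~\ref{lem:least-cost-comp} on each sub-collection to obtain $LB_{k2-1}(C_i)\ge LB_{k1-1}(C_i)$, and push these through the combining operator of (\ref{eqn:lb-k-e-ad}) or (\ref{eqn:lb-k-e-h}), which is coordinatewise monotone (a nonnegative-weighted mean plus one, or a maximum plus one). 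Your index bookkeeping is right ($k1\ge 1$ guarantees $k1-1\ge 0$, so Lemma~\ref{lem:least-cost-comp} legitimately applies), and you correctly flag the singleton sub-collection case, where the convention $LB_k(\{s\})=0$ for all $k$ (as used in Algorithm~\ref{alg:k-lookahead}) keeps the argument intact. What each approach buys: the paper's implied parallel induction keeps the two lemmas symmetric in presentation, while your reduction makes the logical dependence explicit and is shorter. One remark worth keeping in mind: in a fully rigorous treatment, the inductive step of Lemma~\ref{lem:least-cost-comp} itself naturally factors through exactly the entity-wise inequality you prove here (one bounds $LB_{k+1}(C,e)\ge LB_k(C,e)$ for each $e$ and then takes the minimum over $e$), so the two statements are most economically established together; but since the paper asserts and proves Lemma~\ref{lem:least-cost-comp} first, your derivation is not circular and stands as written.
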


The proof follows the line of reasoning in Lemma~\ref{lem:least-cost-comp}.

\subsection{Entity Selection}\label{sec:entity-selection}
The problem of constructing an optimal binary decision tree, minimizing the cost to discover an unknown target set, is NP-complete~\cite{HYAFIL1976}, hence various greedy strategies have been studied in the literature. In this section, we briefly review these strategies and compare them with ours.

\subsubsection{Most even partitioning}
A greedy approximation algorithm which achieves $(\ln n + 1)$-approximation for the decision tree problem on a collection C with $n$ sets is simply to choose an entity at each internal node that most evenly partitions the collection of sets in that node~\cite{Adler2008}.

\subsubsection{Information gain}
Decision tree construction is a very well-understood process in machine learning and data mining. A popular heuristic used by the decision tree algorithms (such as ID3~\cite{Quinlan1986} and C4.5~\cite{Quinlan1993}) for selecting the next feature or entity is the information gain. The entity with the largest information gain is selected to split the collection. If we treat each set in $C$ as a class and each distinct entity $e$ as a feature, then the information gain of $e$ that partitions $C$ into sub-collections $C1$ and $C2$ can be written as

\begin{equation}\label{eqn:info-gain}
    InfoGain(C, e) = \log_2|C| - \frac{|C1| * \log_2|C1| + |C2| * \log_2|C2|}{|C|}.
\end{equation}

\subsubsection{Indistinguishable pairs}
Another entity selection strategy (used by Roy et al.~\cite{Roy2008}) selects an attribute or entity that minimizes the number of indistinguishable pairs of sets. For an entity $e$ that partitions a collection $C$ into sub-collections $C1$ and $C2$, the number of indistinguishable pairs is given as

\begin{equation}\label{eqn:indg}
   Indg(C, e) = \frac{|C1| * (|C1| - 1) + |C2| * (|C2| - 1)}{2}.
\end{equation}

\subsubsection{Cost lower bound}
Entity selection can be done using our cost lower bound $LB_k$ with $k > 0$, as discussed in Section~\ref{sec:lower-bounds}, by selecting the entity that minimizes the cost lower bound.
This is the strategy we use in this paper because of some of the desirable properties of those lower bounds.
However, in some cases, two entities that do not partition a collection in the same way may have the same value of a lower bound. For example, suppose entity $a$ partitions a collection of 16 sets into 9 and 7 sets, and entity $b$ partitions the same collection into 10 and 6 sets. With  
$\lceil \log_{2}9 \rceil = \lceil \log_{2}10 \rceil = 4$,
both entities will have the same value of the lower bound on height. When there are such ties, we select an entity that most evenly partitions the collection to differentiate between entities with the same value of cost lower bound.

Though these strategies seem different from each other, it can be shown that the existing strategies discussed above and our 1-step cost lower bound $LB_1$ select the same entity for the binary decision tree problem. Hence, they all achieve the same $(\ln n + 1)$-approximation factor. 

\begin{lemma}\label{lem:strategies-comp}
Given a collection $C$, the strategies (a) information gain, (b) indistinguishable pairs, and (c) 1-step cost lower bound, $LB_1$, select the same entity that partitions $C$ most evenly into two sub-collections.
\end{lemma}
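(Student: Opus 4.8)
The plan is to show that all three quantities — $InfoGain(C,e)$, $Indg(C,e)$, and $LB_1(C,e)$ — are, as functions of the partition sizes $(|C1|, |C2|)$ with $|C1|+|C2| = |C|$ fixed, optimized exactly when $|C1|$ and $|C2|$ are as close to $|C|/2$ as possible. Since an entity $e$ determines the pair $(|C1|,|C2|)$, "selecting the entity that minimizes (or maximizes) the criterion" is then the same as "selecting the entity whose partition is most balanced," which is the claim. The three criteria are handled separately but by the same one-variable convexity argument.

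First, I would fix notation: write $|C| = n$, $x = |C1|$, so $|C2| = n - x$ with $1 \le x \le n-1$. For information gain, note from Eq.~\eqref{eqn:info-gain} that maximizing $InfoGain(C,e)$ is equivalent to minimizing $f(x) = x\log_2 x + (n-x)\log_2(n-x)$. The function $t \mapsto t\log_2 t$ is strictly convex on $(0,\infty)$, so $f(x)$ is a sum of a convex function and its reflection, hence convex and symmetric about $x = n/2$; therefore $f$ is minimized at $x = n/2$ (or at the two integers straddling it when $n$ is odd), i.e., at the most even split. For indistinguishable pairs, Eq.~\eqref{eqn:indg} gives $Indg(C,e) = \tfrac12\big(x(x-1) + (n-x)(n-x-1)\big) = \tfrac12\big(x^2 + (n-x)^2 - n\big)$; since $x^2 + (n-x)^2$ is again convex and symmetric about $n/2$, minimizing $Indg$ also forces the most even split. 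For the 1-step cost lower bound I would treat the two cost metrics in turn. For the height metric, $LB\_H_1(C,e) = \max\big(\lceil\log_2 x\rceil, \lceil\log_2(n-x)\rceil\big) + 1$; since $\lceil\log_2 \cdot\rceil$ is non-decreasing, the max is minimized by making $\max(x, n-x)$ as small as possible, which is exactly the balanced split. For the average-depth metric, $LB\_AD_1(C,e) = \tfrac1n\big(\lceil x\log_2 x\rceil + \lceil(n-x)\log_2(n-x)\rceil\big) + 1$, and I would reduce this to the information-gain case: dropping the ceilings, this is $\tfrac1n f(x) + 1$ up to the rounding, which is minimized at the even split; I would then argue the ceilings do not change the minimizer (the rounding perturbs each term by less than $1$, and a short case check around $x = \lfloor n/2\rfloor$ confirms the balanced split still wins, consistent with the tie-breaking remark made just before the lemma).

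The one subtlety — and the step I expect to need the most care — is the rounding/ceiling in the $LB\_AD_1$ and $LB\_H_1$ expressions, since the other two criteria are "clean" convex functions while the lower bounds are not. For $LB\_H_1$ the ceiling is harmless because $\lceil\log_2\rceil$ is monotone, so that case is immediate. For $LB\_AD_1$ I would make explicit that although $\lceil x\log_2 x\rceil + \lceil(n-x)\log_2(n-x)\rceil$ is not literally convex in $x$, it still attains its minimum at the integer(s) nearest $n/2$: the continuous lower envelope $x\log_2 x + (n-x)\log_2(n-x)$ is minimized there, and the discrepancy introduced by the two ceilings is at most $2$ in total and strictly less than the gap between the envelope's value at the balanced split and its value at any split with $|x - n/2| \ge 1$ once $n$ is not tiny; the small-$n$ cases can be checked by hand. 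This, together with the stated tie-breaking rule (on a tie in the lower bound, pick the most even partition), yields that $LB_1$ selects precisely the most even partition, matching $InfoGain$ and $Indg$, which completes the proof.
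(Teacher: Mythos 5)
Your overall strategy is the same as the paper's: treat each criterion as a function of the split sizes $(|C1|,|C2|)$ with $|C1|+|C2|=|C|$ fixed, and argue it is optimized at the most balanced split. Parts (a) and (b) are fine (and more rigorous than the paper, which simply asserts the minimization), and your treatment of the height metric is also correct: since $\lceil\log_2\cdot\rceil$ is monotone, $LB\_H_1$ is minimized at the balanced split, with the stated tie-break resolving the many ties (e.g.\ the paper's own $9/7$ vs.\ $10/6$ example). You also deserve credit for being the only one to notice that the ceilings in $LB\_AD_0$ make part (c) non-trivial; the paper's proof of (c) is a one-line ``it can easily be seen.''

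However, the argument you sketch for the $LB\_AD_1$ ceiling case does not work, and the step you defer to ``a short case check'' would actually fail. You claim the ceiling discrepancy (at most $2$) is ``strictly less than the gap between the envelope's value at the balanced split and its value at any split with $|x-n/2|\ge 1$ once $n$ is not tiny,'' but that gap is approximately $g''(n/2)/2 \approx 2\log_2 e/n$, which shrinks like $O(1/n)$ --- so for all but tiny $n$ the inequality goes the \emph{wrong} way. Worse, the conclusion itself can fail: $\lceil x\log_2 x\rceil$ is not convex over the integers (its first differences go $\dots,6,5,\dots$ around $x=17,18,19$), and for $n=36$ one gets $\lceil 17\log_2 17\rceil+\lceil 19\log_2 19\rceil = 70+81 = 151 < 152 = 2\lceil 18\log_2 18\rceil$, so an entity splitting $17/19$ has a strictly \emph{smaller} $LB\_AD_1$ than one splitting $18/18$. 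Hence under the AD metric the $LB_1$ criterion need not select the most evenly partitioning entity, and the lemma in its strict form requires either dropping the ceiling from $LB\_AD_0$ (using $\log_2|C|$ directly, after which your convexity argument closes cleanly) or weakening the claim to ``the most even split attains the minimum up to such rounding ties.'' To be fair, this is a gap your proof shares with the paper's, but your write-up makes a concrete (and incorrect) quantitative claim where the paper merely asserts.
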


\begin{proof}
(a) In (\ref{eqn:info-gain}), since $|C|$ is constant and $|C1| + |C2| = |C|$, the quantity $|C1| * \log_2|C1| + |C2| * \log_2|C2|$ is minimum when $C$ is most evenly partitioned into $C1$ and $C2$. Hence, the entity that partitions $C$ most evenly has the largest information gain and is selected by \textit{information gain} strategy.

\noindent
(b) Similarly, in (\ref{eqn:indg}), $|C1| * (|C1| - 1) + |C2| * (|C2| - 1)$ is minimum when $C$ is most evenly partitioned. Therefore, the entity that partitions $C$ most evenly has the minimum value of Indg() and is selected by \textit{indistinguishable pairs} strategy.

\noindent
(c) It can also easily be seen from (\ref{eqn:lb-1-e-ad}) and (\ref{eqn:lb-1-e-h}) (after replacing $LB\_AD_0$ and $LB\_H_0$ with their respective values from (\ref{eqn:lb-0-ad}) and (\ref{eqn:lb-0-h})) that, an entity that most evenly partitions the collection $C$ into $C1$ and $C2$, gives the minimum value of $LB_1$ in (\ref{eqn:lb-1}), thus is selected by our \textit{1-step cost lower bound} strategy.
\end{proof}

This paper builds on top of our cost lower bounds, and this offers a few benefits compared to other entity selection strategies. First, the cost functions are simple and intuitive, offering an easy choice between the average case and worse case costs. Second, we can develop efficient and effective $k$-steps lookahead strategies using $LB_k$ with $k > 1$. In particular, we develop an effective pruning strategy that significantly reduces the search space and the runtime of our lookahead strategies without affecting the cost.
Although $k$-steps lookahead strategies have been studied for entropy~\cite{Bonifati14interactiveinference} and information gain~\cite{Esmeir2004}, we are not aware of similar pruning strategies developed for these other measures.

\subsection{Pruning} \label{sec:pruning}
We want to find a decision tree that requires the least number of interactions with the user for a set discovery hence has the least cost.
However, exhaustive searching the space of possible trees for the one with the least cost is computationally intensive and not always feasible.
We propose a \textit{pruning} strategy that allows certain choices of entities for decision tree nodes to be safely rejected to reduce the size of the search space without affecting the correctness.

\begin{lemma}\label{lem:pruning-correctness}
Let $LB_{k}(C, e)$ denote our lower bound of cost for entity $e$ in collection $C$ by looking
k-steps ahead, and suppose entity selection is done based on $LB_{k}$, $k$-steps cost lower bound for some $k$. Consider entities $e_1$ and $e_2$, both in $C$. If $LB_{l}(C, e_2) \geq LB_{k}(C, e_1)$ for $l \leq k$, then $e_2$ can be pruned without affecting the correctness of the search.
\end{lemma}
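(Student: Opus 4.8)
The plan is to reduce everything to two facts already available: the per‑entity monotonicity of the lower bounds (Lemma~\ref{lem:least-cost-e-comp}), and the definition of the selection rule, which at the current node picks an entity minimising $LB_k(C,\cdot)$ (with ties broken by most‑even partitioning, as described in Section~\ref{sec:entity-selection}). The pruning in Lemma~\ref{lem:pruning-correctness} is a branch‑and‑bound over the candidate entities: $e_1$ plays the role of the incumbent whose full $k$‑step bound has already been computed, while for $e_2$ only an $l$‑step bound ($l\le k$) has been computed so far. I want to show that $e_2$ can already be discarded, i.e.\ that no matter how the remaining $k-l$ lookahead levels for $e_2$ turn out, $e_2$ cannot be the entity the rule would ultimately select over $e_1$.

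First I would invoke Lemma~\ref{lem:least-cost-e-comp} with $k1=l$ and $k2=k$: since $l\le k$, we have $LB_{l}(C,e_2)\le LB_{k}(C,e_2)$. Chaining this with the hypothesis $LB_{l}(C,e_2)\ge LB_{k}(C,e_1)$ gives $LB_{k}(C,e_2)\ \ge\ LB_{l}(C,e_2)\ \ge\ LB_{k}(C,e_1)$, so the true $k$‑step bound of $e_2$ is no smaller than the (already known) $k$‑step bound of $e_1$. In other words, $LB_l(C,e_2)$ is a valid lower estimate of $LB_k(C,e_2)$ that has already reached the incumbent value, which is exactly the condition under which a branch‑and‑bound search may close a branch.

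From this, the conclusion follows: because the selection rule returns an entity attaining $\min_{e\in E} LB_k(C,e)$, and $e_1$ (or whichever entity actually attains that minimum) already realises a value $\le LB_k(C,e_2)$, the entity $e_2$ is never a strictly better choice than $e_1$. Hence removing $e_2$ from consideration — and, in particular, skipping the computation of its remaining $k-l$ lookahead levels — changes neither the entity the rule selects at this node nor the cost bound it guarantees for the resulting subtree, so the search remains correct. The one case needing a word is $LB_k(C,e_2)=LB_k(C,e_1)$: here the two entities carry identical cost lower bounds, so retaining either one preserves the guarantee; if one insists on the exact tie‑break behaviour, the most‑even‑partition comparison can simply be applied among the surviving tied entities, which does not resurrect any cost loss.

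The main obstacle, and the place I would be most careful, is pinning down what "correctness of the search" means and then showing the tie case does not break it: one must argue that a pruned $e_2$ with $LB_k(C,e_2)=LB_k(C,e_1)$ is genuinely interchangeable with $e_1$ for the purposes of the $LB_k$‑guided construction — i.e.\ that the algorithm's objective is the $LB_k$ value (hence the approximation guarantee), not a particular tie‑breaking outcome — and to make sure the incumbent $e_1$ is itself a legitimate, fully evaluated, non‑pruned candidate, since the soundness of the whole scheme rests on never pruning the would‑be minimiser, which the inequality chain above is precisely designed to prevent.
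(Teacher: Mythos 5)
Your proposal is correct and follows essentially the same route as the paper: both rest on the monotonicity of $LB_k(C,e)$ in $k$ (Lemma~\ref{lem:least-cost-e-comp}) to chain $LB_{k}(C,e_2)\ge LB_{l}(C,e_2)\ge LB_{k}(C,e_1)$ and conclude that $e_2$ can never beat the incumbent. Your additional care about the tie case and about what ``correctness'' means is a reasonable elaboration of a point the paper's two-line proof leaves implicit, but it does not change the argument.
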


\begin{proof}
Based on Lemmas~\ref{lem:least-cost-comp} and \ref{lem:least-cost-e-comp} $LB_{k}(C, e_2)$ cannot be smaller than $LB_{k}(C, e_1)$
when $LB_{l}(C, e_2) \geq LB_{k}(C, e_1)$ for $l \leq k$. Hence 
$e_2$ can be pruned without affecting the correctness of the search.
\end{proof}

As an example, consider the collection of sets shown in Fig.~\ref{fig:example-sets}, denoted as $C1$, and let H be our cost metric. The entities $c$ and $d$ are present in 3 sets and absent in 4 sets. Hence, the 1-step lower bound, $LB\_H_{1}()$, for entities $c$ and $d$ is $max(log_2(3), log_2(4))+1 = 3$. Similarly, 1-step lower bound for all other informative entities is 4. Suppose, we are using 3-steps cost lower bound for entity selection. The 3-steps lower bound for $d$, $LB\_H_{3}(C1, d)$, is 3. Since $LB\_H_{1}()$ for all other entities is not less than 3, any further calculation for them can be pruned safely.

Now, consider another collection where the sets are the same as in collection $C1$ except $S1 = \{a, b, c\}$ and $S4 = \{a, b, c, d, g, h\}$, and let us denote this collection with $C2$. The set counts for all entities are as before, hence the 1-step lower bound, $LB\_H_{1}()$, for the informative entities remain the same as in collection $C1$. But, the 3-steps lower bound for $d$, $LB\_H_{3}(C2, d)$, is 4 now. Therefore, we cannot prune the 3-steps lower bound calculation for entity $c$ using the 1-step lower bound, $LB\_H_{1}(C2, c)$, which is 3. Thus, we calculate the 2-steps lower bound for $c$, $LB\_H_{2}(C2, c)$, which is 4. Now, any further lower bound calculation for $c$ can be pruned using the 2-steps lower bound since it is not less than the already calculated least 3-steps lower bound for entity $d$.

\subsubsection{Implementation}
There are several places where our pruning is applied.
First, entities are sorted based on their 1-step lower bounds in non-decreasing order, the $k$-steps lower bounds for entities are calculated in that order, and the least value found so far is updated accordingly. If the 1-step lower bound of an entity $e$ is not less than the already found least $k$-steps lower bound, then the k-steps lower bound calculations of entity $e$ and all the subsequent entities in the sorted order are pruned.

Second, when calculating the $k$-steps lower bound for an entity, the already found least value is used to set an upper limit for each of the recursive steps of the calculation. Whenever the upper limit is reached, the rest of the $k$-steps lower bound calculation for the current entity is pruned. Since, for an entity $e$ to be selected,  $LB_{k}(C, e)$ needs to be less than the already found least value of the lower bound (AFLV), if $e$ partitions a collection $C$ into $C1$ and $C2$, the upper limit (UL) for the accepted value of $LB_{k-1}(C1)$ can be calculated using (\ref{eqn:lb-k-e-ad}), for the cost metric AD, by replacing $LB\_AD_{k}(C, e)$ with AFLV and $LB\_AD_{k-1}(C2)$ with the least possible value $LB\_AD_{0}(C2)$ as
\begin{equation}\label{eqn:upper-limit-1-ad}
  UL(C1) =  \frac{(AFLV - 1) * |C| - |C2| * LB\_AD_{0}(C2)}{|C1|},
\end{equation}
and similarly using (\ref{eqn:lb-k-e-h}), for the cost metric H, as
\begin{equation}\label{eqn:upper-limit-1-h}
  UL(C1) =  AFLV - 1.  
\end{equation}
Once the actual value of $LB_{k-1}(C1)$ is calculated, the upper limit (UL) for the accepted value of $LB_{k-1}(C2)$ can be calculated, for the cost metric AD, as

\begin{equation}\label{eqn:upper-limit-2-ad}
  UL(C2) =  \frac{(AFLV - 1) * |C| - |C1| * LB\_AD_{k-1}(C1)}{|C2|},
\end{equation}
and for the cost metric H, as
\begin{equation}\label{eqn:upper-limit-2-h}
  UL(C2) =  AFLV - 1.  
\end{equation}

\subsection{Lookahead Strategies}\label{sec:lookaheads}
Now that we have covered our cost functions and pruning strategies, we present our $k$-steps lookahead strategy and two variations of it, for selecting the next question to ask by looking $k$-steps ahead. These strategies choose an entity based on the $k$-steps lower bound for the cost of a collection, as discussed in Section~\ref{sec:entity-selection}. When there are ties between two or more entities in terms of cost, the entity that partitions the collection most evenly is chosen. If there are still ties for the choice of entities, then an entity is selected randomly from the set of candidates. The algorithm applies our pruning strategy in every step where the search space can be cut without compromising the required number of questions, as discussed in Section~\ref{sec:pruning}.

\subsubsection{$k$-Lookahead with Pruning ($k$-LP)}
Algorithm~\ref{alg:k-lookahead} presents our \textit{$k$-lookahead with pruning} strategy. It takes a collection $C$ of unique sets, the number of steps $k$ to look ahead, and an upper limit $ul$ of the $k$-steps lower bound for an entity to be selected, as input. Initially, the upper limit is set to a large number. Then, it sorts the entities in the collection based on their partitioning capability from the most even to the least even (Line 11). Since, the entity that partitions a collection most evenly has the minimum value of the 1-step cost lower bound, the entities will also be sorted based on their 1-step lower bound of cost in non-decreasing order. This way, an entity with both the minimum lower bound and also the most even partitioning capability is considered first, breaking possible ties on the cost lower bound. For each entity in the sorted order that partitions the collection $C$ into two sub-collections $C^+$ and $C^-$, the $(k-1)$-steps lower bounds for $C^+$ and $C^-$ are calculated by recursively calling the algorithm (Lines 16-32). Those quantities are plugged into (\ref{eqn:lb-k-e-ad}) or (\ref{eqn:lb-k-e-h}) (depending on the cost metric used) to obtain the $k$-steps lower bound $l$ for each entity (Line 33). The algorithm keeps track of the entity with the least $k$-steps lower bound $l$ and sets it as the upper limit $ul$ for the next entity to be considered (Lines 33-35). Since the entities are sorted, if it finds an entity with an equal or larger 1-step lower bound than the upper limit $ul$, the algorithm stops early and prunes all the remaining entities (Lines 14-15). To further reduce the search space, it calculates the upper limits for $C^+$ (using (\ref{eqn:upper-limit-1-ad}) or (\ref{eqn:upper-limit-1-h})) and $C^-$ (using (\ref{eqn:upper-limit-2-ad}) or (\ref{eqn:upper-limit-2-h})) and passes them to the recursive call (Lines 22-23 and 29-30). If no entity can be selected with a lower value of $(k-1)$-steps lower bound than the calculated upper limit, then it stops processing the current entity and moves to the next entity (Lines 24-25 and 31-32). Finally, the algorithm returns an entity $e$ with the minimum $k$-steps lower bound of cost (Lines 7-10 or 38). To speed up the calculations, the algorithm uses memoization by storing and reusing the results for different inputs of collection $C$ and steps, $k$ (Lines 1-6, 9, and 37).

\begin{algorithm}[tb]
\caption{K-Lookahead with Pruning (K-LP)}
\label{alg:k-lookahead}
\begin{flushleft}
\textbf{Input:} collection $C$ of unique sets, steps $k$, and upper limit $ul$ of the $k$-steps cost lower bound for an entity to be selected \\
\textbf{Output:} selected entity and it's $k$-steps cost lower bound\\
\end{flushleft}
\begin{algorithmic}[1]
\If {$(C, k) \in Cache$}
    \State $e$, $l \leftarrow Cache[(C, k)]$
    \If {$ul \le l$}
        \State \textbf{return} $null$, $l$
    \ElsIf {$e \neq null$}
        \State \textbf{return} $e$, $l$
    \EndIf
\EndIf
\If {$k = 1$}
    \State \textbf{let} entity $e$ to most evenly partition $C$
    \State $Cache[(C, k)] \leftarrow (e$, $LB_{1}(C, e))$
    \State \textbf{return} $Cache[(C, k)]$
\EndIf
\State $SE \leftarrow$ sort entities according to most even partitioning of $C$
\State $e \leftarrow null$
\For {each entity $e_i \in SE$}
    \If {$LB_{1}(C, e_i) \geq ul$}
        \State \textbf{break}
    \EndIf
    \State \textbf{let} $C^+$ be the collection $\{S_j \in C \mid e_i \in S_j\}$
    \State $C^- \leftarrow C - C^+$
    \If {$|C^+| = 1$}
        \State $l^+ \leftarrow 0$
    \Else
        \State $lb^- \leftarrow LB_{0}(C^-)$
        \State $ul^+ \leftarrow$ \textbf{Upper-Limit} $(ul, |C^+|, lb^-, |C^-|, |C|)$
        \State $e^+, l^+ \leftarrow$ \textbf{K-LP} $(C^+, k - 1, ul^+)$
        \If {$e^+ = null$}
            \State \textbf{continue}
        \EndIf
    \EndIf
    \If {$|C^-| = 1$}
        \State $l^- \leftarrow 0$
    \Else
        \State $ul^- \leftarrow$ \textbf{Upper-Limit} $(ul, |C^-|, l^+, |C^+|, |C|)$
        \State $e^-, l^- \leftarrow$ \textbf{K-LP} $(C^-, k - 1, ul^-)$
        \If {$e^- = null$}
            \State \textbf{continue}
        \EndIf
    \EndIf
    \State $l \leftarrow$ \textbf{K-Steps-Lower-Bound} $(|C^+|, l^+, |C^-|,  l^-, |C|)$
    \If {$l < ul$}
        \State $ul \leftarrow l$
        \State $e \leftarrow e_i$
    \EndIf
\EndFor
\State $Cache[(C, k)] \leftarrow (e$, $ul)$
\State \textbf{return} $e$, $ul$ 
\end{algorithmic}
\end{algorithm}

Two important observations can be made about our k-LP algorithm. First, it can be shown that the algorithm finds \textit{an optimal solution} if $k$ is set to the height of an optimal tree or a greater value. Second, the early
stopping opportunities, which are based on our pruning strategy presented earlier, sets apart our lookahead strategies from the existing lookaheads in literature~\cite{Esmeir2004,Bonifati14interactiveinference}.

For a collection of $n$ unique sets and $m$ distinct entities, the runtime of Algorithm~\ref{alg:k-lookahead} is $O(m^{k}n)$ since finding 1-step lower bounds for $m$ entities is $O(mn)$ and in each recursive step, there will be $O(m)$ calls to the next step. Our next two strategies further reduce the time by setting bounds on the number of candidate entities.

\subsubsection{$k$-LP with Limited Entities ($k$-LPLE)}
Despite all the pruning done using our lower bounds, the runtime of our $k$-LP algorithm increases as a polynomial function of $m$ (e.g., quadratic for $k=2$), and the algorithm becomes very inefficient for large values of $k$. On the other hand, the chance of constructing a better tree increases as we increase $k$. 
\revised{One good trade-off is to limit the number of candidate entities in each step of the lower bound calculation to $q<m$, ranked in terms of the 1-step lower bound of cost. For $q<<m$, the reduction in runtime can be significant, analogous to setting a \textit{beam size} in deep learning algorithms~\cite{zhang2021dive}.}
This can be implemented by adding an extra input $q$ to Algorithm~\ref{alg:k-lookahead}, modifying Line 11 so that $SE$ contains only the first $q$ sorted entities, and passing $q$ on the recursive calls to the algorithm in Lines 23 and 30.

\subsubsection{$k$-LP with Limited but Variable number of Entities($k$-LPLVE)}
The runtime of k-LPLE may further be reduced by greedily considering only a single entity in each recursive step of the k-steps lower bound calculation for an entity. The intuition here is that an entity with the smallest 1-step lower bound is more probable to be the best choice. Hence, our k-LPLVE strategy limits the number of candidate entities to only one (with the least 1-step lower bound) during each step of the lower bound calculation.
With this strategy, the search time is expected to reduce further, but the quality of the results is not expected to change much (see Section~\ref{sec:experiments} for evaluation results).
This strategy can be implemented by performing the same modifications as k-LPLE in Algorithm~\ref{alg:k-lookahead} except that when the function is called from outside with $q$, $SE$ in Line 11 takes the first $q$ sorted entities during that call and only the first entity during the subsequent recursive calls to the function.

\subsection{Set Discovery}\label{sec:set-discovery}
The set discovery scheme studied in this paper is an interactive process that starts with an initial question posed to the user and continues with follow-up questions based on the user's answers. The lookahead strategies
choose an entity to be the next question, which is expected to minimize the cost of discovering the user's desired set. With each user feedback, the same selection process continues until the user's desired set is discovered or the user is satisfied with the refined sub-collection of sets and does not want to answer more questions.

The general approach for \textit{set discovery} is presented in Algorithm~\ref{alg:set-discovery}. It takes the entire collection $C$ of unique sets and a user-provided initial set $I$ as inputs and finds the sub-collection $CS$ containing all the supersets of $I$ in $C$ (Lines 2-4). It then iteratively, selects the best entity $e$ according to the entity selection strategy denoted by $\mathbf{\Upsilon}$, asks the user a question about the presence of that entity in the desired set, and re-calculates the sub-collection $CS$ of candidate sets based on the user feedback until a single set is left or the halt condition $\mathbf{\Gamma}$ (e.g., the user does not want to answer more questions) is met (Lines 5-12). Finally, it returns the remaining sets that are consistent with the user's answers (Line 13).
The runtime of Algorithm~\ref{alg:set-discovery} depends on the number of questions required to discover the desired set and the strategy used. In the worst case, the number of questions can be $n - 1$ for a collection of $n$ sets.

\begin{algorithm}[!tbp]
\caption{Set Discovery}
\label{alg:set-discovery}
\begin{flushleft}
\textbf{Inputs:} collection $C$ of unique sets and initial set $I$\\
\textbf{Output:} sets that are consistent with the user's answers\\
\textbf{Parameter:} entity selection strategy $\mathbf{\Upsilon}$ and halt condition $\Gamma$
\end{flushleft}
\begin{algorithmic}[1]
\State $CS \leftarrow  \emptyset $
\For {each set $S_i \in C$}
    \If {$I \subseteq S_i$}
       \State $CS \leftarrow CS \cup \{S_i\}$
    \EndIf
\EndFor
\While {$|CS| > 1$ and $\Gamma$ is $false$}
    \State $e \leftarrow \mathbf{\Upsilon} (CS)$
    \State $\alpha \leftarrow$ \textbf{query} user about the presence of $e$ in target set
    \State \textbf{let} $P$ be the collection $\{S_i \in CS \mid e \in S_i\}$
    \If {$\alpha$ is $true$}
        \State $CS \leftarrow P$
    \Else
        \State $CS \leftarrow CS - P$
    \EndIf
\EndWhile
\State \textbf{return} $CS$ 
\end{algorithmic}
\end{algorithm}

\noindent\textbf{Offline tree construction}
Our tree construction may be done offline for static collections, for example, when the initial query sets are known in advance or are always empty. 
\revised{An offline construction may be useful when the same decision tree is constructed multiple times or is used by multiple queries.}
Algorithm~\ref{alg:tree-construction} provides the steps for precomputing a decision tree on a collection of sets. With the decision tree constructed offline, a set discovery can be efficiently performed by asking questions and following only a single path through the tree in real-time.

Algorithm~\ref{alg:tree-construction} takes a collection $C$ of unique sets as input. If the collection has only one set, then it constructs a tree $T$ consisting of a single node with the only set $G$ (Lines 1-3). Otherwise, the algorithm selects the best entity $e$ using the entity selection strategy denoted by $\mathbf{\Upsilon}$ (Line 5). It recursively constructs the subtrees $T^+$ and $T^-$ for the two sub-collections $C+$ and $C-$ respectively (Lines 6-9). Finally, a tree $T$, consisting of a root node $e$ and two child subtrees ($T^+, T^-$), is constructed and returned (Lines 10-11).

There are $n - 1$ internal nodes in a full binary decision tree representing a collection of $n$ sets and $m$ entities, and each internal node requires a k-steps lookahead, which costs $O(m^{k}n)$. Hence the runtime of  
Algorithm~\ref{alg:tree-construction} is $O(m^{k}n^{2})$.

\begin{algorithm}[!tbp]
\caption{Tree Construction}
\label{alg:tree-construction}
\begin{flushleft}
\textbf{Input:} collection $C$ of unique sets \\
\textbf{Output:} a decision tree representation of the input collection\\
\textbf{Parameters:} entity selection strategy $\mathbf{\Upsilon}$
\end{flushleft}
\begin{algorithmic}[1]
\If {$|C| = 1$} 
    \State \textbf{let} $G$ be the only element of $C$
    \State $T \leftarrow$ \textbf{Tree} $(G, null, null)$
\Else
    \State $e \leftarrow \mathbf{\Upsilon} (C)$
    \State \textbf{let} $CS^+$ be the collection $\{S_i \in C \mid e \in S_i\}$
    \State $CS^- \leftarrow C - CS^+$
    \State $T^+ \leftarrow$ \textbf{Tree-Construction} $(CS^+)$
    \State $T^- \leftarrow$ \textbf{Tree-Construction} $(CS^-)$
    \State $T \leftarrow$ \textbf{Tree} $(e, T^+, T^-)$
\EndIf
\State \textbf{return} $T$ 
\end{algorithmic}
\end{algorithm}


\section{Experiments}
\label{sec:experiments}
 This section reports an experimental evaluation of our algorithms and pruning strategies on both real and synthetic data and under different parameter settings.

\subsection{Evaluation Setup}
As our evaluation measures, we
study (a) the effectiveness of our algorithms in finding a ``good" solution for the problem of set discovery, (b) the effectiveness of our pruning strategy in reducing the size of the search space, (c) the efficiency of our algorithms in terms of the running time, and (d) the scalability of our algorithms with both the number and the size of sets. Our results are compared to the relevant algorithms in the literature (when applicable).

The effectiveness of our set discovery is measured in terms of the \textit{number of questions} to be answered by a user looking for a target set. Without knowing much about the target set of a user, we assume all sets that contain an initially provided set are equally likely. With this, the effectiveness may be defined in terms of the \textit{average number of questions} or the \textit{maximum number of questions} to be answered by a user. These quantities also represent the average depth of the leafs (AD) and the height (H) of a decision tree that is constructed. 

The efficiency of an entity selection algorithm is measured in terms of the \textit{tree construction time}, which is the time needed to construct a decision tree using the selection strategy. It can be noted that the tree construction time is different from the time spent when searching for a specific set (\textit{discovery time}). For the former, Algorithm~\ref{alg:tree-construction} constructs a whole tree with all sets placed at the leaves and the internal nodes giving the paths to all sets at the leaves, whereas for the latter, Algorithm~\ref{alg:set-discovery} only constructs a path from the root to the target set.
The latter is much less if the wait time for user responses is excluded.

The algorithms being evaluated include entity selection using $k$-LP, $k$-LPLE, and $k$-LPLVE strategies. For a comparison with entity selection strategies from the literature, our evaluation also includes \textit{information gain} (InfoGain)~\cite{Quinlan1986} and \textit{gain-$k$}~\cite{Esmeir2004}. Our reported result for  \textit{information gain} holds for \textit{indistinguishable pairs}~\cite{Roy2008} and 1-step lookahead, i.e. gain-$k$ and $k$-LP with $k = 1$, since they all select the same entity, as shown earlier  (Lemma~\ref{lem:strategies-comp}).

Our algorithms were implemented in Python 3 and our experiments were run on a 64-bit 
machine with Intel(R) Core i5-9300H @2.40 GHz processor and 8 GB RAM.

\subsection{Datasets and Queries}
We conduct our experiments on two datasets for set discovery, including \textit{web tables}, which consists of a collection of entity sets extracted from the columns of various web tables, and \textit{synthetic datasets}, where large collections of sets are generated following some distributions. The former evaluates our algorithms on a real dataset, whereas the latter assesses the scalability of our strategies under different collection sizes and parameter settings.

We also evaluate our algorithms on the task of query discovery, based on a \textit{baseball} database.

\subsubsection{Web tables}
Our web table dataset is collected from a 2014 snapshot of Wikipedia. We extract tables in document text and treat each of their columns as a set based on the observation that each column has a domain and the values are drawn from that domain. As an example, one set includes 58 NBA players including Steve Nash, Kobe Bryant, and Tracy McGrady. The sets are diverse, covering many domains of interest, but also noisy. We remove any set that has less than three distinct elements and sets that consist of all numbers. We further remove duplicate entries, making each list a pure set, and a few frequent keywords such as unknown, tba, total. After those cleanings, we obtain 1,407,178 unique sets containing in total 6,312,409 distinct entities. Examples of sets and queries are given elsewhere \cite{Hasnat2021Thesis}
For our experiments, we considered each combination of two entities as a possible initial example set and the sets that contained the two example entities as candidates. This resulted in 14,491 initial sets, giving us the same number of sub-collections with at least 100 sets in each. 
The choice of two example entities was based on the observation that at least two entities from a semantic class are required to unambiguously represent the class. As an example, Liverpool may represent both a ``City" and a ``Football Club" whereas Liverpool and Arsenal together do not represent the ``Cities" semantic class. The number of sets in the selected collections was in the range of $[100, 11219]$ with an average of 390 and a standard deviation of 478, and the number of distinct entities was in the range of $[15, 15186]$ with an average of 3,112 and a standard deviation of 2,379. We constructed decision trees for all the selected sub-collections to evaluate our strategies.

\subsubsection{Synthetic data}
To study the performance of our entity selection strategies under different data distributions as well as the scalability with the number of entities and sets in the collection, we generated a few synthetic set collections.
The set generation follows a copy-add preferential mechanism where some elements are copied from an existing set and the rest of the elements are added from a universe of elements. Similar copying models are used in other domains (e.g., the dynamics of the web graph~\cite{kumar2000stochastic}, the copying and publishing relationships between data sources~\cite{dong2009truth}, etc.).
Each set has two parameters: a set size $s$, chosen randomly from a range of values (e.g., $[50, 100]$), and an overlap ratio $\alpha \in [0,1)$. For each set, we choose a size $s$ from the range of possible sizes randomly and an overlap ratio $\alpha$. Then $\alpha * s$ elements are copied from a previously generated set and $(1-\alpha) * s$ elements are added from the entity universe. If a previously generated set does not exist or does not have enough elements, then additional elements are selected from the entity universe to bump up the set size to $s$.
We generated 19 synthetic collections by varying the overlap ratio $\alpha$, the range of set sizes $d$, and the
 number of sets $n$.
Table~\ref{tab:synthetic-datasets} gives some information about these collections including the number of distinct entities in each collection. For this dataset, no entities were selected as query entities (i.e., the user-provided initial set is considered empty), and all the sets in each collection are considered as possible target sets.

\begin{table}[!tbp]
    \resizebox{\columnwidth}{!}{%
    \begin{subtable}[b]{0.33\linewidth}
        \centering
        \begin{tabular}{|c|c|c|c|}
        \hline
        \thead{Overlap \\ ratio \\$\alpha$} & \thead{Number \\of \\distinct \\entities}\\
        \hline
        0.99 & 23k\\
        0.95 & 36k\\
        0.90 & 59k\\
        0.85 & 83k\\
        0.80 & 108k\\
        0.75 & 132k\\
        0.70 & 156k\\
        0.65 & 178k\\
        \hline
        \end{tabular}
        \vspace*{2mm}
        \caption{$n = 10$k, $d = 50 - 60$}
        \label{tab:synthetic-alpha}
    \end{subtable}%
    \begin{subtable}[b]{0.33\linewidth}
        \centering
        \begin{tabular}{|c|c|c|c|}
        \hline
        \thead{Number \\of \\sets $n$} & \thead{Number \\of \\distinct \\entities}\\
        \hline
        10k & 59k\\
        20k & 125k\\
        40k & 216k\\
        80k & 385k\\
        160k & 622k\\
        \hline
        \end{tabular}
        \vspace*{2mm}
        \caption{$\alpha = 0.9$, $d = 50 - 60$}
        \label{tab:synthetic-n}
    \end{subtable}
    \begin{subtable}[b]{0.33\linewidth}
        \centering
        \begin{tabular}{|c|c|c|c|}
        \hline
        \thead{Set size \\range \\ $d$} & \thead{Number \\of \\distinct \\entities}\\
        \hline
        50-100 & 119k\\
        100-150 & 150k\\
        150-200 & 180k\\
        200-250 & 214k\\
        250-300 & 249k\\
        300-350 & 283k\\
        \hline
        \end{tabular}
        \vspace*{2mm}
        \caption{$n = 10$k, $\alpha = 0.9$}
        \label{tab:synthetic-d}
    \end{subtable}%
    }
    \caption{Synthetic data by varying (a) overlap ratio $\alpha$, (b) number of sets $n$ and (c) set size range $d$}
    \label{tab:synthetic-datasets}
\end{table}

\subsubsection{Baseball database}
The baseball database~\cite{baseball} is a complex, multi-relation database that contains batting, pitching, and fielding statistics plus standings, team stats, player information, and more for Major League Baseball (MLB) covering the years between 1871 and 2020. Our experiment is based on the \textit{People} table which contains information about name, birth, death, height, weight, batting and throwing hand, etc., of 20,185 baseball players. For our experiment, we considered only CNF (conjunctive normal form) queries with conditions on columns \textit{birthCountry}, \textit{birthState}, \textit{birthCity}, \textit{birthYear}, \textit{birthMonth}, \textit{birthDay}, \textit{height}, \textit{weight}, \textit{bats}, and \textit{throws} of the \textit{People} table. At first, we constructed 7 target queries that could be interesting to a user. Table~\ref{tab:target-queries-baseball} describes the target queries and the number of tuples in their outputs. Then, for each target query, we randomly selected 2 output tuples as the example tuples and generated candidate CNF queries that contain the example tuples in their output. The candidate queries are generated using the following simple steps:

\begin{enumerate}
    \item The columns are grouped into categorical and numerical with columns \textit{birthCountry}, \textit{birthState}, \textit{birthCity}, \textit{birthMonth}, \textit{birthDay}, \textit{bats}, and \textit{throws} treated as categorical and \textit{birthYear}, \textit{height}, and \textit{weight} treated as numerical in our experiments.
    
    \item A few reference values are defined for each numerical column. For examples, \textit{height}: \{60, 65, 70, 75, 80\}, \textit{weight}: \{120, 140, 160, 180, 200, 220, 240, 260, 280, 300\}, and \textit{birthYear}: \{1850, 1870, 1890, 1910, 1930, 1950, 1970, 1990\}.
    
    \item A selection condition on each categorical column is constructed as the disjunctions of the unique values of the example tuples for that column. For example, if the birth city of an example player is Chicago and that of another player is Seattle, then the selection condition is $birthCity = ``Chicago" \vee birthCity = ``Seattle"$, whereas if the birth city of all example players is Chicago, then the selection condition is $birthCity = ``Chicago"$.
    
    \item A few selection conditions on each numerical column are constructed using the possible intervals of the reference values that contain the values of all example tuples. For example, if the height of an example player is 62 and that of another player is 73, then the possible selection conditions on \textit{height} are $height > 60 \wedge height < 75$, $height > 60 \wedge height < 80$, $height > 60$, $height < 75$, and $height < 80$.
    
    \item Each selection condition on a column yields a candidate query, and the conjunction of any two selections on different columns provide additional candidate queries. \sloppy For example, $\sigma_{birthCity = ``Los Angeles"} (People)$ is a query with selection condition on a single column, whereas $\sigma_{birthCity = ``Los Angeles" \wedge height > 70 \wedge height < 80} (People)$ is a query with selection conditions on two columns. Similarly, candidate queries with selection conditions on more columns can be generated. Our experiments consider queries with selection conditions on up to two columns.
\end{enumerate}

\begin{table}[!tbp]
    \centering
    \resizebox{\columnwidth}{!}{%
    \begin{tabular}{|c|c|c|}
    \hline
    \thead{Target \\query} & \thead{Query \\description} &  \thead{Number of \\output tuples}\\
    \hline
    
    T1 & $\sigma_{birthCountry = ``USA" \wedge birthYear > 1990} (People)$ & 892\\
    T2 & $\sigma_{birthCity = ``Los Angeles" \wedge height > 70 \wedge height < 80} (People)$ & 201\\
    T3 & $\sigma_{bats = ``L" \wedge throws = ``R"} (People)$ & 2179\\
    T4 & $\sigma_{birthCountry = ``USA" \wedge bats = ``B"} (People)$ & 939\\
    T5 & $\sigma_{birthMonth = 12 \wedge birthDay = 25} (People)$ & 65\\
    T6 & $\sigma_{height > 75 \wedge weight > 260} (People)$ & 49\\
    T7 & $\sigma_{height < 65 \wedge weight < 160} (People)$ & 26\\
    \hline
    \end{tabular}%
    }
    \vspace*{0.5mm}
    \caption{Target queries for the baseball database}
    \label{tab:target-queries-baseball}
\end{table}

Once the candidate queries were generated, we applied our set discovery strategy to discover the target query. The user answers about the membership of the presented tuples were simulated by verifying them against the output of the target query. Table~\ref{tab:example-tuples-baseball} provides information about the selected example tuples for each target query, the number of generated candidate queries from the example tuples, and the average number of tuples in the output of those candidate queries.

\begin{table}[!tbp]
    \centering
    \begin{tabular}{|c|c|c|c|}
    \hline
    \thead{Target \\query} & \thead{Player ids of \\example tuples} & \thead{\# of candidate\\ queries} & \thead{Average number \\of output tuples}\\
    \hline
    T1 & baragca01, phillev01 & 776 & 9404.24 \\
    T2 & ryanbr01, edwarda01 & 987 & 11254.35\\
    T3 & ellioal01, drumrke01 & 940 & 10612.07\\
    T4 & dashnle01, craigro02 & 916 & 10957.30\\
    T5 & brownll01, ellerfr01 & 1339 & 9772.70 \\
    T6 & evansde01, fulchje01 & 600 & 7187.00\\
    T7 & emmerbo01, gearidi01 & 1189 & 7795.78\\
    
    \hline
    \end{tabular}
    \vspace*{0.5mm}
    \caption{Information about selected example tuples and generated candidate queries on baseball database}
    \label{tab:example-tuples-baseball}
\end{table}

\subsection{Evaluation Results}\label{sec:exp-set-discovery}
\subsubsection{Choosing the parameters $k$ and $q$}
\revised{
As $k$ increases, the generated trees are expected to be closer to an optimal tree and when $k$ is set to the height of an optimal tree or greater, our algorithm finds an optimal tree. However, as $k$ increases, the running time increases dramatically. Parameter q acts similar to \textit{beam size} in deep learning, and setting this parameter allows us to increase $k$ without too much affecting the running time.}
To set the parameters $k$ and $q$ for our algorithms, we did run some experiments on our web tables dataset.
Fig.~\ref{fig:k-selection} shows that the runtime of $k$-LP increases by one to two orders of magnitude when the number of lookahead steps $k$ is increased from 2 to 3. At the same time, the average number of questions usually becomes less with higher $k$. To balance the runtime with the quality of the trees that are constructed, we set $k = 2$ for our experiments with the $k$-LP strategy. The runtime may also be kept low, while increasing $k$, using the $k$-LPLE strategy,
which limits the number of entities in each step. For our experiments with $k$-LPLE and $k$-LPLVE strategies, we set $k = 3$ and experiment with different values (up to 50) of the number of entities $q$. The average number of questions that are required remains almost the same when the value of $q$ exceeds 10, but the runtime increases significantly. Therefore, we set $q = 10$ for the $k$-LPLE and $k$-LPLVE strategies. The average numbers of questions for larger values of $q$ are almost the same hence are not reported here.

\begin{figure}[!htb]
    \centering
    \resizebox{.79\linewidth}{!}{%
        \includegraphics[width=\linewidth]{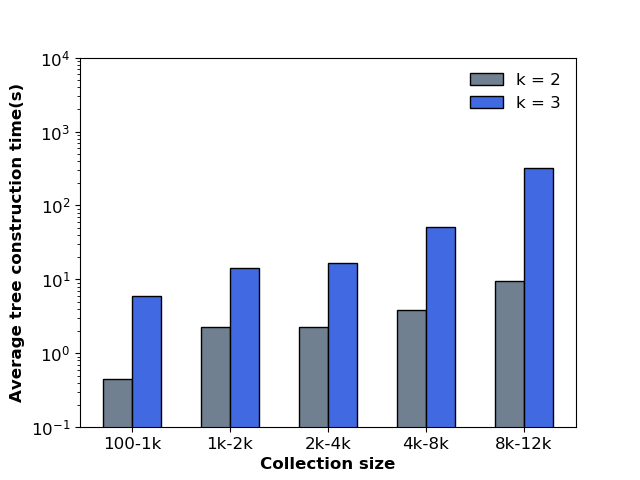}%
     }
    \caption{Tree construction time (seconds) for $k$-LP varying $k$ on web tables dataset}
    \label{fig:k-selection}
\end{figure}

\subsubsection{Comparison to  strategies in the literature}
\revised{
A strong baseline for comparison is \textit{information gain}~\cite{Quinlan1993}, which is also equivalent to \textit{indistinguishable pairs}, \textit{gain-$k$} with $k = 1$, and our $k$-LP with $k = 1$; they all select the same entity as discussed in Section~\ref{sec:entity-selection}.
Our evaluation shows improvements over InfoGain in the average number of questions with the cost metric AD and the maximum number of questions with the cost metric H.
}
The mean improvement in the maximum number of questions (H) is close to one, whereas the mean improvement for the average number of questions (AD) is less due to the facts that the improvement is averaged over all sets in each sub-collection and that the average number of questions for InfoGain is already very close to the optimal (the average difference in the average number of questions with optimal solution for InfoGain is only about $0.048$) with little room for improvement. 
\revised{
The improvements in the number of questions over Info-Gain for all our reported methods (k-LP with k=2 and k-LPLE and k-LPLVE with k=3 and q=10) under both AD and H are all statistically significant at $\alpha=0.01$ using one-tailed t-test.
It should be noted that Info-Gain is a pretty strong baseline, and any
improvement in the number of questions is important. For example, if the questions are medical tests required to identify a disease, then a small reduction even in the average number of tests could save the patients a large amount of money and time to complete the tests.
}

    

\begin{table}[]
    \centering {
    \begin{tabular}{l|l l l l l l l}
             & T1 & T2 & T3 & T4 & T5 & T6 & T7 \\ \hline
         Avg & 97.3\% & 99.4\% & 99.1\% & 99.7\% & 88.5\% & 99.7\% & 99.9\% \\
         Min & 90.1\% & 94.6\% & 96.5\% & 98.0\% & 30.6\% & 98.1\% & 99.5\%
    \end{tabular}}
    \caption{Average and minimum number of entities pruned at all nodes for the Baseball dataset}
    \label{tab:ent-prubed-baseball}
\end{table}

\subsubsection{Effectiveness of our pruning}
The pruning proposed in this paper makes a huge difference in the tree construction time of all our strategies. 
\revised{On our Web tables dataset, more than 99\% of candidate entities are pruned at the root level (for both $k$-LP with $k=2$ and $k=3$), meaning no tree is constructed for those entities as the root. The pruning also happens at the nodes under the root. Table~\ref{tab:ent-prubed-baseball} shows the average and the minimum number of entities pruned at each node for the Baseball dataset at $k=2$. The results are almost the same for $k=3$.}
\revised{In most cases, more than 90\% of the entities are pruned, demonstrating the effectiveness of our lower bounds and the choice of questions.}
Fig. \ref{fig:pruning-k} shows the speedup on the web tables dataset, and Fig.~\ref{fig:pruning-synthetic} shows the same on the synthetic datasets.
The average speedup in runtime on the web tables dataset is in the range of two to three orders of magnitude for $k = 2$ and up to five orders of magnitude when $k = 3$. Since the runtime of gain-$k$ increases polynomially with the number of entities and exponentially with $k$, the speedup is more for larger values of $k$ and on datasets with a large number of entities and sets. This can be seen in 
Fig.~\ref{fig:pruning-k} for the web tables dataset where $k$ is varied from 2 to 3  and in
Fig.~\ref{fig:pruning-synthetic} for the synthetic dataset with a fixed $k$ and varying the number of sets.

\begin{figure*}[!htb]
    \begin{subfigure}[b]{0.49\linewidth}
        \centering
        \resizebox{.79\linewidth}{!}{%
            \includegraphics[width=\linewidth]{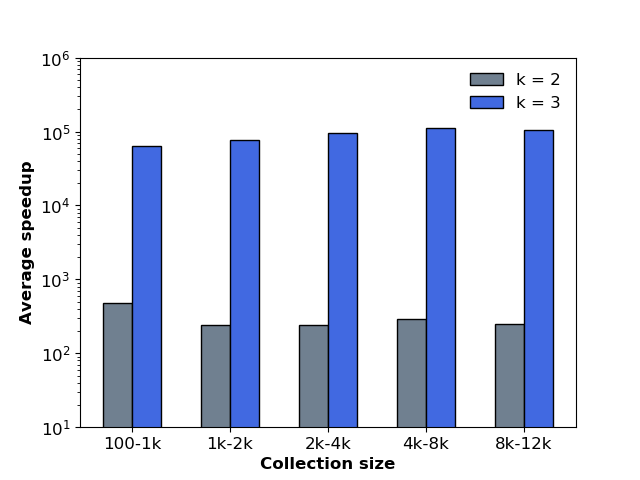}%
        }
       \caption{$k$-LP vs Gain-$k$ on web tables data\\ \hfill}
        \label{fig:pruning-k}
    \end{subfigure}
    \begin{subfigure}[b]{0.49\linewidth}
        \centering
        \resizebox{.79\linewidth}{!}{%
            \includegraphics[width=\linewidth]{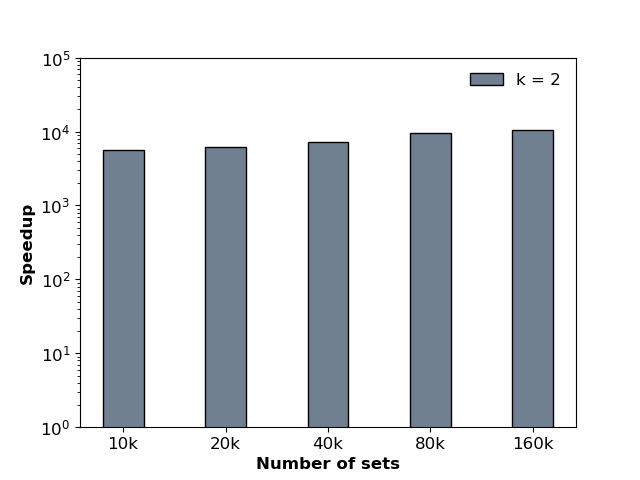}%
        }
       \caption{$k$-LP vs Gain-$k$ on synthetic data\\ \hfill}
        \label{fig:pruning-synthetic}
    \end{subfigure}
    \caption{Speedup of our strategies because of pruning}
    \label{fig:speedup-pruning}
\end{figure*}

\subsubsection{Performance varying the overlap between sets}
One factor that affects the performance of a set discovery is the amount of overlap between sets. Consider an extreme case where there is no overlap between sets. With $n$ sets, one needs to ask roughly $n/2$ questions on average ($n-1$ questions in the worst case) to find a target set. As the overlap between sets increases, there is more chance to filter more than one set with each question.
To better understand this relationship between the overlap and the search performance, we varied the overlap ratio as in Table~\ref{tab:synthetic-alpha} for our synthetic dataset and measured the number of questions that were needed to discover each set.  
Fig.~\ref{fig:effects-overlapping} shows the average number of questions that were needed as the overlap ratio varied from 0.65 to 0.99. As the overlap ratio increases, both the average number of questions and the tree construction time decrease. When the overlap ratio becomes less than 0.90, the average number of questions starts showing an upward trend. This upward trend is expected to continue to the point where one needs to ask roughly $n/2$ questions on average ($n-1$ questions in the worst case) to find a target set. This happens, for example, when
all sets have the same elements except at least one more element that distinguishes each set from the rest. 

\begin{figure*}[tb]
\centering
\begin{minipage}{.3\textwidth}
  \centering\includegraphics[width=2.35in]{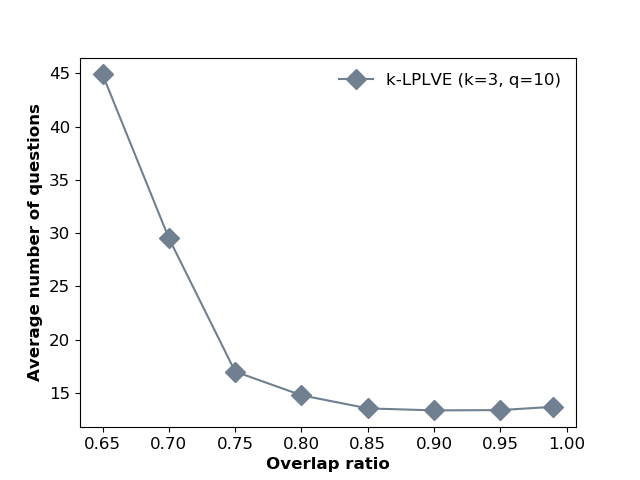}
\end{minipage}
\hspace*{0.2cm}
\begin{minipage}{.3\textwidth}
\centering\includegraphics[width=2.35in]{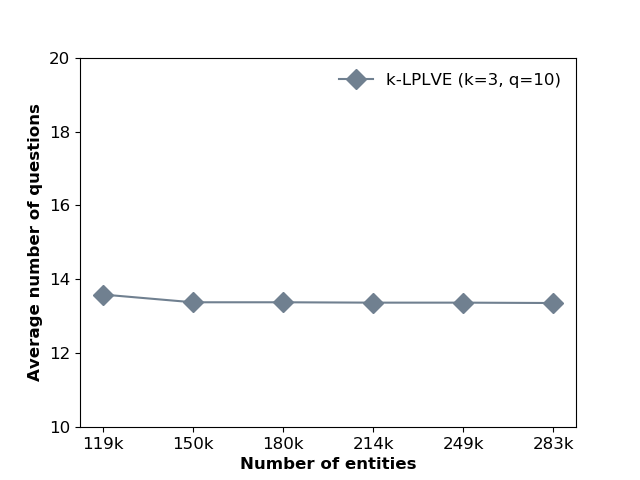}
\end{minipage}
\hspace*{0.2cm}
\begin{minipage}{.3\textwidth}
\centering\includegraphics[width=2.35in]{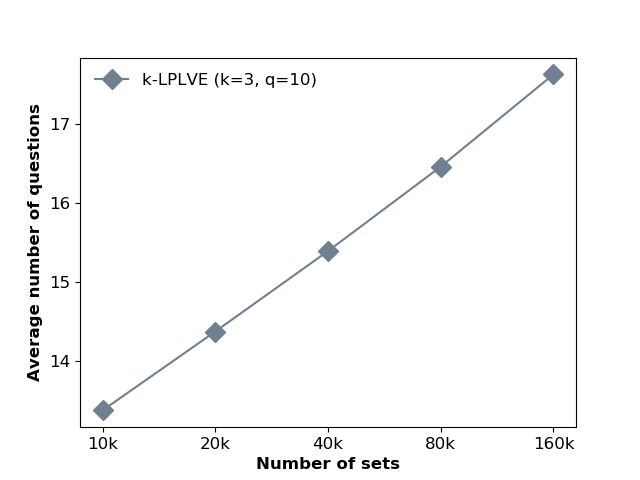}
\end{minipage}
\begin{minipage}{.3\textwidth}
  \centering\includegraphics[width=2.35in]{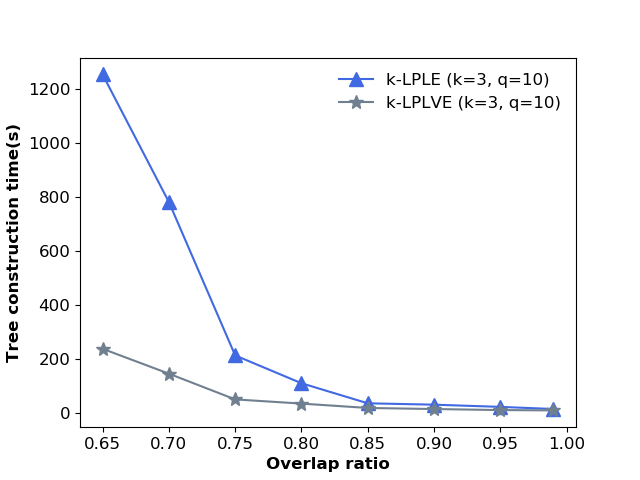}
  \caption{Effects of set overlaps on average number of questions (top) and tree construction time in seconds (bottom)}
  \label{fig:effects-overlapping}
\end{minipage}
\hspace*{0.2cm}
\begin{minipage}{.3\textwidth}
  \centering\includegraphics[width=2.35in]{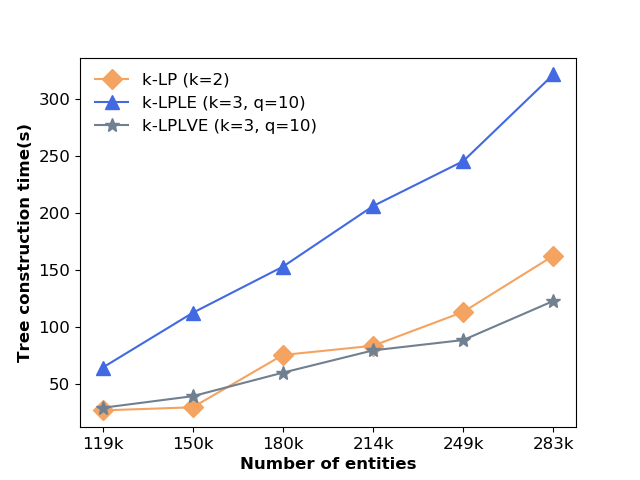}
  \caption{Effects of increasing the number of distinct entities in a collection on average number of questions (top) and tree construction time in seconds (bottom)}
  \label{fig:effects-set-size}
\end{minipage}
\hspace*{0.2cm}
\begin{minipage}{.3\textwidth}
\centering\includegraphics[width=2.35in]{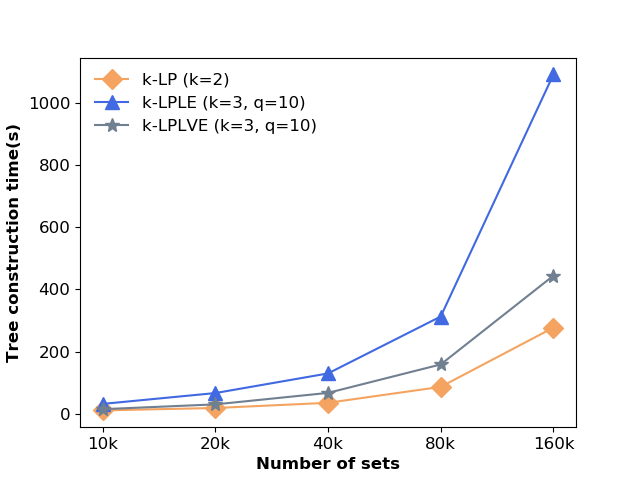}
\caption{Effects of increasing the number of sets on average number of questions (top) and tree construction time in seconds (bottom)}
\label{fig:effects-collection-size}
\end{minipage}
\end{figure*}


\subsubsection{Scalability with the number of entities and the collection size}
To evaluate the scalability of our algorithms on larger datasets, we conducted some experiments using our synthetic data. In one experiment, we varied the number of distinct entities in a collection, while keeping the number of sets and the overlap ratio fixed at 10k and 0.9 respectively. The number of distinct entities changes (as shown in Table~\ref{tab:synthetic-d}) with the set size varied.
As can be seen in Fig.~\ref{fig:effects-set-size}, the average number of questions is not affected much, but the tree construction time increases because of the larger number of candidate entities that are considered during the lower bounds calculation. The increase in running time is linear for $k$-LPLE and $k$-LPLVE, and that of $k$-LP is quadratic with $k=2$.

In another experiment, we varied the number of sets in the collection while keeping the set size in $[50,60]$ and the overlap ratio fixed at 0.9. The number of distinct entities $m$ increases as well (as shown in Table~\ref{tab:synthetic-n}), when we increase the number of sets $n$.
As shown in Fig.~\ref{fig:effects-collection-size}, with each doubling of the input size, the average number of questions increases roughly by 1. The tree construction time is expected to increase linearly with the number of sets if the number of distinct entities is fixed. In our experiment, the tree construction time looks a bit far from linear (and more quadratic) because of the increase in $m$ as $n$ increases.

\subsubsection{Evaluation results on query discovery}
Fig.~\ref{fig:query-discovery-questions-baseball} shows both the number of questions and the query discovery time to discover the target queries on the baseball database for the baseline InfoGain and our lookahead strategies. It can be seen that the number of questions for $k$-LP, $k$-LPLE, and $k$-LPLVE is less than or equal to InfoGain (except T7 for $k$-LP). Since none of the strategies are optimal, our strategies may sometimes require more questions than InfoGain, but that probability is very low as discussed in Section~\ref{sec:exp-set-discovery}.
Moreover, although the query discovery time of our strategies is higher than InfoGain, it is relatively small when the candidate queries have large result sets (on average 7000 to 12000), as shown in Table~\ref{tab:example-tuples-baseball}.
Finally, an important observation can be made about our query discovery strategy. The user is required to confirm the membership of only a few tuples (9 to 11) to find the target query among a large number of candidate queries (600 to 1200) which is more convenient than listing all the possible output tuples (close to 2000 for some of our target queries) of a target query.

\begin{figure*}[tb]
    \begin{subfigure}[b]{0.49\linewidth}
        \centering
        \begin{tabular}{c|c|c|c|c}
        \thead{Target query} & \thead{InfoGain} & \thead{$k$-LP\\($k = 2$)} & \thead{$k$-LPLE\\($k = 3$, $q = 10$)} &
        \thead{$k$-LPLVE\\($k = 3$, $q = 10$)}\\
        \hline
        T1 & 10 & 10 & 10 & 10\\
        T2 & 10 & 9 & 10 & 10\\
        T3 & 10 & 10 & 9 & 9\\
        T4 & 10 & 10 & 9 & 9\\
        T5 & 11 & 11 & 10 & 10\\
        T6 & 10 & 9 & 9 & 9\\
        T7 & 10 & 11 & 10 & 10\\
        
        \end{tabular}
        \caption{Number of questions}
        \label{tab:questions-baseball}
    \end{subfigure}
    \begin{subfigure}[b]{0.49\linewidth}
        \vspace*{5mm}
        \centering
       \begin{tabular}{c|c|c|c|c}
       \thead{Target query} & \thead{InfoGain} & \thead{$k$-LP\\($k = 2$)} & \thead{$k$-LPLE\\($k = 3$, $q = 10$)} &
        \thead{$k$-LPLVE\\($k = 3$, $q = 10$)}\\
        \hline
        T1 & 1.798 & 163.097 & 11.662 & 7.999\\
        T2 & 3.234 & 17.880 & 37.867 & 26.060\\
        T3 & 2.921 & 31.499 & 31.589 & 19.453\\
        T4 & 2.796 & 20.548 & 20.944 & 15.894\\
        T5 & 3.687 & 19.124 & 23.314 & 18.690\\
        T6 & 0.906 & 10.747 & 10.395 & 4.806\\
        T7 & 2.187 & 7.108 & 16.257 & 17.685\\
        \end{tabular}
        \caption{Query discovery time (seconds)}
        \label{tab:discovery-time-baseball}
    \end{subfigure}
    \caption{Number of questions and query discovery time to find the target queries on baseball database}
    \label{fig:query-discovery-questions-baseball}
\end{figure*}

\revised{
\section{Discussions}
\label{sec:discussion}
Our work is focused on reducing the number of interactions by selecting examples that are most informative, effectively reducing the number of candidates, but that is only one factor affecting the user experience. There are multiple other factors that need to be considered when applying our work in real settings.

\noindent \textbf{Multiple-choice examples}
  Sometimes it is more desirable to offer a set of examples (instead of one) and asking if one or more of those examples belong to the target set. For example, this can be more effective if the user is not sure about some examples. A interesting question is how those examples should be selected. One approach is to aim for maximizing the expected gain, as done in a multi-armed bandit setting. This can dramatically increases the size of the search space though, and using effective pruning strategies is essential. An alternative is to find some strategies for selecting the nodes of a decision tree that provide `good' sets of examples but not computationally intensive.

\noindent \textbf{Possibility of errors in answers}
  It is possible that users make mistakes in their answers, and this can introduce another interesting challenge in detecting that a mistake is made and recovering from them. One approach is to backtrack when no target set satisfies all constraints and revisit those constraints. An alternative is to assign a level of certainty, and make the optimization process aware of the uncertainties.
  
\noindent \textbf{Unanswered questions}}
  Sometimes the user is uncertain about the membership of an entity in the target set and may reply ``don't know'' to the membership question. In such cases, the entity selection strategy can be called again using the same collection of candidate sets but excluding the entities that the user is not sure about. With unanswered questions, the search may not resolve to a single set. 

\section{Conclusions}
\label{sec:conclusions}
We have studied the problem of set discovery using an interactive approach, where example entities from candidate sets are presented and the search is narrowed down based on the feedback about the presence of those entities in the target set. 
We have formulated the search as a tree optimization and have developed both effective and efficient k-step lookahead algorithms to construct a tree which results in near-optimal number of questions needed to discover a set. Our evaluation on both real and synthetic data shows the efficiency and scalability of our algorithms.

Our work can be extended or improved in a few directions, in addition to those highlighted in Section~\ref{sec:discussion}. One direction is to further study the distribution of entities in a collection. Better understanding the distribution may provide some insight to develop other strategies. Another direction is to study scenarios where the sets to be discovered are not equally likely. Extending our algorithms to the cases where the sets are noisy
or have errors
is another direction.



\section*{Acknowledgments}
	This research was funded by the Natural Sciences and Engineering Research Council of Canada and through a grant from Servus Credit Union.

\bibliographystyle{ACM-Reference-Format}
\bibliography{references}

\end{document}